\newtheorem{theorem}{Theorem}
\newtheorem{lemma}[theorem]{Lemma}								%
\newtheorem{assumption}[theorem]{Assumption}	
\newtheorem{remark}{Remark}
\newtheorem{algorithm}[remark]{Algorithm}
\numberwithin{equation}{section}	
\numberwithin{theorem}{section}
\renewcommand{\(}{\left(}				
\renewcommand{\)}{\right)}
\renewcommand{\[}{\left[}
\renewcommand{\]}{\right]}
\def\R{\mathbb{R}}
\def\d{\partial}		
\def\ind{\mathbb{I}}
\def \abs#1{\left| #1 \right| }
\newcommand{\diag}{\operatorname{diag}}
\def\T{\top}
\DeclareMathOperator*{\argmin}{arg\,min}
\newtheorem{example}[theorem]{Example}
\long\def\symbolfootnote[#1]#2{\begingroup\def\thefootnote{\fnsymbol{footnote}}\footnote[#1]{#2}\endgroup}
\begin{document}

\title{Optimization of Fire Sales and Borrowing in Systemic Risk\footnote{We thank two anonymous referees for helpful comments.}}

\author{
Maxim Bichuch
\thanks{
Department of Applied Mathematics and Statistics,
Johns Hopkins Unversity
3400 North Charles Street, 
Baltimore, MD 21218. 
{\tt mbichuch@jhu.edu}. Research is partially supported by the Acheson J. Duncan Fund for the Advancement of Research in Statistics.}
\and  Zachary Feinstein
\thanks{
Department of Electrical and Systems Engineering, 
Washington University in St. Louis, St. Louis,
MO 63130, USA,
{\tt  zfeinstein@wustl.edu}. }
}
\date{\today}
\maketitle

\begin{abstract}
This paper provides a framework for modeling financial contagion in a network subject to fire sales and price impacts, but allowing for firms to borrow to cover their shortfall as well. We consider both uncollateralized and collateralized loans. The main results of this work are providing sufficient conditions for existence and uniqueness of the clearing solutions (i.e., payments, liquidations, and borrowing); in such a setting any clearing solution is the Nash equilibrium of an aggregation game.
\end{abstract}

{\bf AMS subject classification} 91G99, 90B10, 91A06.\\
\indent {\bf JEL subject classification} G32.\\
\indent {\bf Keywords} Systemic Risk, Networks, Fire Sales, Borrowing, Financial Contagion.\\

\section{Introduction}

Traditional financial risk considers each financial firm as separate and individual entities that do not interact or exacerbate each others downside events. Systemic risk, in contrast, considers the risk of the distress of a single bank or multiple banks spreading throughout the financial system, up to and including threatening the health of the entire system, due to characteristics of the interactions between firms. This spread of defaults is also called financial contagion. These contagious events can occur through local connections (e.g., contractual obligations) or global connections (e.g., impacts to asset prices). Such a systemic event occurred during the 2007-2009 financial crisis in which the entire financial system was threatened with failure. Due to the threat, this event led to government intervention requiring a significant bailout and directly precipitating a global recession. It is for these reasons that the modeling of systemic events is of paramount importance. This study will advance the modeling of such events by incorporating notions of borrowing and fire sales.

This paper will extend the Eisenberg-Noe network model approach of \cite{EN01}.  That paper considers the network of interbank obligations and finds the equilibrium payments.  
Central banks and regulators have applied the Eisenberg-Noe model to study cascading failures in the banking systems within their jurisdictions, see, e.g., \cite{ACP14,HK15,BEST04,ELS13,U11,G11,BBCH17}. 

The Eisenberg-Noe model has been extended previously to include more realistic structures for contagion; this includes bankruptcy costs, cross-holdings, and fire sales.  We refer to \cite{AW_15,Staum,H16} for surveys of these extensions.  
In our work we will consider specifically an extension related to fire sales.  Fire sales for a single (representative) illiquid asset have been studied in, e.g., \cite{CFS05,NYYA07,GK10,AFM13,CLY14,AW_15,AFM16}, and for multiple illiquid assets in, e.g., \cite{feinstein2015illiquid,feinstein2015leverage,feinstein2017multilayer}. 

The goal of this paper is to investigate the effects of confidence and liquidity on systemic risk and financial contagion.  To do so, a modified Eisenberg-Noe network model is considered under which there is a network of banks with connections between them representing interbank liabilities; additionally, the banks hold illiquid assets that may need to be liquidated in order to raise funds.  During the crisis events that are being studied, the asset selling is subject to price impacts from fire sales.  Moreover, firms are allowed to raise funds through short-term borrowing, in addition to liquidating assets.  The short-term interest rate is postulated to be a function of the confidence in the financial system and of the specific bank; as such the firms may have heterogeneous interest rates.  The focus of this paper is on the Nash equilibria of bank decisions over the two methods of raising cash.



The novelty of this project is in the consideration of other avenues of increasing reserve levels and raising cash. Specifically, one area that will be considered is borrowing. Most banks rely on short-term borrowing in order to fund their daily operations. The current project advances the field as it adds another dimension to systemic risk -- confidence, and specifically confidence as expressed through borrowing rates. 
While it is universally agreed that confidence is one of the most important pillars in finance, it is notoriously hard to quantify and predict. The problem consists of two main issues: How to quantify confidence in a measurable way; how to convert this quantity into cash flows, i.e.\ how to tie this quantity to a bank's cash flow. We use the price-to-book ratio (P/B) as a proxy for market confidence; see also \cite{sarin2016have} which was one of the first papers to include confidence when studying systemic risk.
We refer also to \cite{veraart2018distress} and \cite{barucca2016valuation} for a reduced form and structural model, respectively, that introduce notions of mark-to-market write-downs on interbank lending prior to actualized defaults; in doing so, these works incorporate the idea that confidence, or the lack thereof, can be an additional avenue for contagion.
As for the second issue the connection to the cash flow of a bank will be done by assuming that P/B is one of the variables that influences the short-term borrowing interest rate the bank has to pay to finance its daily operations. This has also been suggested in  \cite{GY14}, which cite liquidity and confidence as examples of additional channels that can generate substantial losses from contagion.  This is done in \cite{bichuch} which proposed a model to quantify and capture this effect in calculations of perceived riskiness of an individual bank. Endogenous interest rates in a financial network are computed in \cite{feinstein2018pricing}. This ties neatly to the other possibilities to raise cash to pay off debt. The ability to borrow, and liquidity in general, proved to be one of the major reasons of the financial crisis of 2007-2009, and one of the solutions employed by the regulators was to increase the liquidity of the overnight lending market.

This paper combines the above features in assessing the systemic risk in the financial system.  In particular, the proposed framework is general enough to incorporate all of the aforementioned concepts.  This model is an extension of the framework of \cite{EN01} incorporating borrowing, confidence, liquidity, and fire sales.  
The organization of this paper is as follows. 
Section \ref{sec:model} contains the details of the initial modification to the Eisenberg-Noe model that includes fire sales which we will be extending. Section \ref{sec:fire-sale} contains the analysis of optimal liquidation through fire sales and borrowing, together with some simple examples. Section \ref{sec:repo} incorporates the additional constraint that all borrowing must be collateralized.  Numerical case studies on both a symmetric system and one calibrated to European banking data are performed in Section \ref{sec:casestudies}. The proofs for all results are provided in the Appendix.

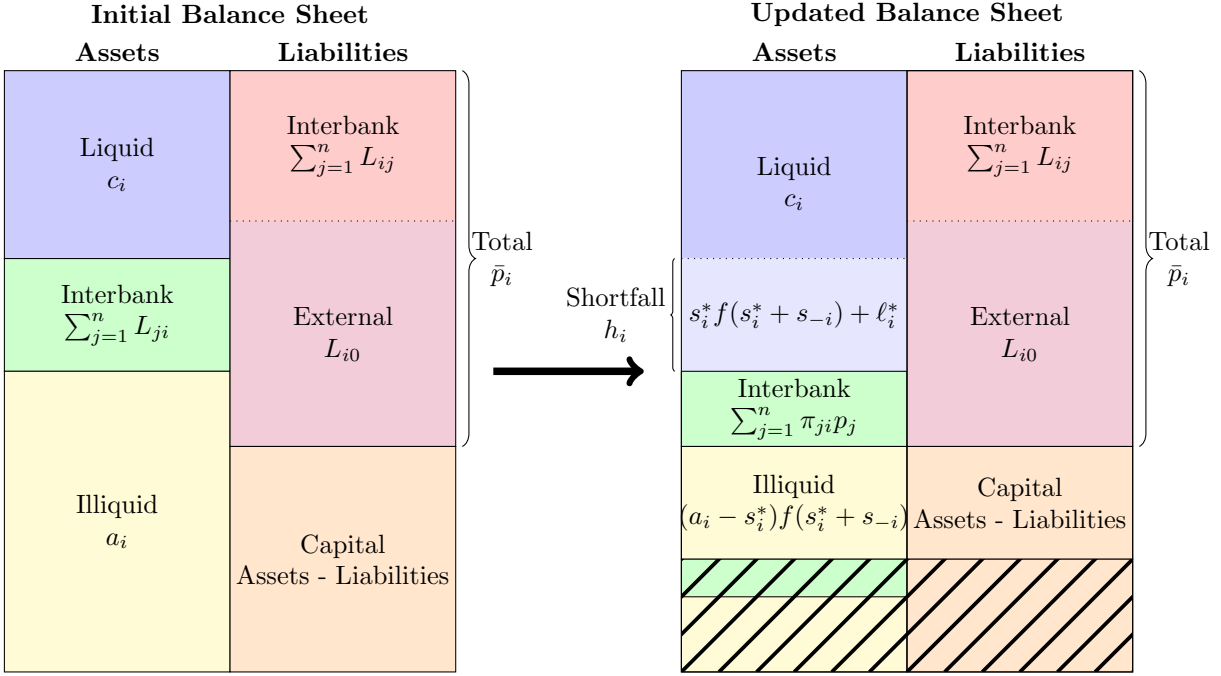
\begin{figure}
\centering
\begin{tikzpicture}
\draw[draw=none] (0,8.5) rectangle (6,9) node[pos=.5]{\bf Initial Balance Sheet};
\draw[draw=none] (0,8) rectangle (3,8.5) node[pos=.5]{\bf Assets};
\draw[draw=none] (3,8) rectangle (6,8.5) node[pos=.5]{\bf Liabilities};

\filldraw[fill=blue!20!white,draw=black] (0,5.5) rectangle (3,8) node[pos=.5,style={align=center}]{Liquid \\ $c_i$};
\filldraw[fill=green!20!white,draw=black] (0,4) rectangle (3,5.5) node[pos=.5,style={align=center}]{Interbank \\ $\sum_{j = 1}^n L_{ji}$};
\filldraw[fill=yellow!20!white,draw=black] (0,0) rectangle (3,4) node[pos=.5,style={align=center}]{Illiquid \\ $a_i$};

\filldraw[fill=red!20!white,draw=none] (3,6) rectangle (6,8) node[pos=.5,style={align=center}]{Interbank \\ $\sum_{j = 1}^n L_{ij}$};
\filldraw[fill=purple!20!white,draw=none] (3,3) rectangle (6,6) node[pos=.5,style={align=center}]{External \\ $L_{i0}$};
\draw[dotted] (3,6) -- (6,6);
\draw[decorate,decoration={brace,amplitude=5,mirror},xshift=2.5pt] (6,3) -- (6,8) node[pos=.5,right,style={align=center}]{Total \\ $\bar p_i$};
\draw (3,3) rectangle (6,8);
\filldraw[fill=orange!20!white,draw=black] (3,0) rectangle (6,3) node[pos=.5,style={align=center}]{Capital \\ Assets - Liabilities};

\draw[->,line width=1mm] (6.5,4) -- (8.5,4);

\draw[draw=none] (9,8.5) rectangle (15,9) node[pos=.5]{\bf Updated Balance Sheet};
\draw[draw=none] (9,8) rectangle (12,8.5) node[pos=.5]{\bf Assets};
\draw[draw=none] (12,8) rectangle (15,8.5) node[pos=.5]{\bf Liabilities};

\filldraw[fill=blue!20!white,draw=none] (9,5) rectangle (12,8) node[pos=.5,style={align=center}]{Liquid \\ $c_i$};
\filldraw[fill=blue!10!white,draw=none] (9,4) rectangle (12,5.5) node[pos=.5,style={align=center}]{$s_i^* f(s_i^* + s_{-i}) + \ell_i^*$};
\draw[dotted] (9,5.5) -- (12,5.5);
\draw[decorate,decoration={brace,amplitude=1.5},xshift=-2.5pt] (9,4) -- (9,5.5) node[pos=.5,left,style={align=center}]{Shortfall \\ $h_i$};
\filldraw[fill=green!20!white,draw=black] (9,3) rectangle (12,4) node[pos=.5,style={align=center}]{Interbank \\ $\sum_{j = 1}^n \pi_{ji} p_j$};
\filldraw[fill=yellow!20!white,draw=black] (9,1.5) rectangle (12,3) node[pos=.5,style={align=center}]{Illiquid \\ $(a_i-s_i^*) f(s_i^* + s_{-i})$};
\filldraw[fill=green!20!white,draw=black] (9,1) rectangle (12,1.5);
\filldraw[fill=yellow!20!white,draw=black] (9,0) rectangle (12,1);

\filldraw[fill=red!20!white,draw=none] (12,6) rectangle (15,8) node[pos=.5,style={align=center}]{Interbank \\ $\sum_{j = 1}^n L_{ij}$};
\filldraw[fill=purple!20!white,draw=none] (12,3) rectangle (15,6) node[pos=.5,style={align=center}]{External \\ $L_{i0}$};
\draw[dotted] (12,6) -- (15,6);
\draw[decorate,decoration={brace,amplitude=5,mirror},xshift=2.5pt] (15,3) -- (15,8) node[pos=.5,right,style={align=center}]{Total \\ $\bar p_i$};
\draw (12,3) rectangle (15,8);
\filldraw[fill=orange!20!white,draw=black] (12,1.5) rectangle (15,3) node[pos=.5,style={align=center}]{Capital \\ Assets - Liabilities};
\filldraw[fill=orange!20!white,draw=black] (12,0) rectangle (15,1.5);
\draw (9,0) rectangle (15,8);
\draw (12,0) -- (12,8);

\begin{scope}
    \clip (9,0) rectangle (15,1.5);
    \foreach \x in {-9,-8.5,...,15}
    {
        \draw[line width=.5mm] (9+\x,0) -- (15+\x,6);
    }
\end{scope}
\end{tikzpicture}
\caption{Stylized balance sheet for firm $i$ in Case III before and after payment and price updates. }
\label{fig:balance_sheet}
\end{figure}

\section{Original Eisenberg-Noe Fire Sales Model}\label{sec:model}
We wish to begin with some simple notation that will be utilized throughout this manuscript.  We will use the the notation that $[\cdot] _{i=1,..., n, j=1,...,m}$ and $[\cdot] _{i=1,..., n}$  to specify a $n\times m$ dimensional matrix and $n$-dimensional vector respectively, where $\diag\(   [\cdot] _{i=1,..., n}\)$ is a $n\times n$ matrix, with diagonal element $(i,i)$ is the $i$-th coordinate of the vector. Additionally, ${\bf{1}}_{n\times m}$ is a $n \times m$ matrix (and ${\bf{1}}_n$ is a $n \times 1$ vector) with all elements one.

Given $n$ interlinked banks, denote $L_{ij} \geq 0$ to be the liability of bank $i$ towards $j$, for $i,j=1,...,n$, and denote $\bar p_{i} = \sum_{j=0}^n L_{ij}$ to be the total liability of bank $i$. The liability $L_{i0},~i=1,..., n$ is assumed to be external liability of bank $i$ to an entity outside of the banking network.  It will be assumed that $L_{i0}\ge 0$. Under a pro-rata payment scheme, $[\Pi]_{ij}=\pi_{ij}$ will denote the relative liability, which is given by $\pi_{ij} = \frac{L_{ij}}{\bar p_i}$ if $\bar p_i > 0$ and $\pi_{ij} = 0$ otherwise, for $i=1,..., n, j=0, ..., n$. These define the $n\times (n+1)$ matrix $\Pi$. 
Moreover, it will be assumed that bank $i$ has liquid endowment $c_i \geq 0$ and illiquid endowment $a_i \geq 0$. While the liquid endowment can be assumed to be cash, the illiquid endowment is in physical units of assets as the liquidation price of these assets remains to be determined and will be assumed to depend on the size of the sale. For now, this price will be denoted by $q$. 

Following the network models of \cite{EN01,CFS05,AFM16,feinstein2015illiquid}, the notional payments are given by
$p = \bar p \wedge \left(c + s q + \Pi^\T p\right),$
where $s_i \in [0,a_i]$ is the quantity of illiquid assets being sold by firm $i$ evaluated by mark-to-market valuation with price $q$, and $c_i$ is its cash reserve. Throughout this work we will denote $x \wedge y = (\min(x_1,y_1),...,\min(x_n,y_n))^\T$ for $x,y \in \mathbb{R}^n$. There is an implicit no short selling constraint in this model.
It is assumed that the inverse demand curve $f$ for the illiquid asset provides the equilibrium price via
\begin{align}
q = f\(\sum_{i = 1}^n s_i\).
\label{eq:q}
\end{align}
The following are assumed about the inverse demand function $f$:
\begin{assumption}\label{ass:idf}
The inverse demand function $f: \R_+ \to [0,1]$ is strictly decreasing and twice continuously differentiable with $f(0) = 1$.
Let $M \geq \sum_{i = 1}^n a_i f(0) = \sum_{i = 1}^n a_i$ be the total initial market capitalization of the illiquid asset and $f(M) > 0$.
Additionally it will be assumed that the first derivative $f': \R_+ \to -\R_+$ is nondecreasing.  Further assume that the mapping $s \in [0,M] \mapsto s f(s)$ is strictly increasing and $s \in [0,M] \mapsto \frac{d^2}{ds^2}(s f(s)) = 2f'(s) + s f''(s)<0$ is strictly negative.
\end{assumption}

\begin{remark}
Here and elsewhere throughout this manuscript to simplify notation the one sided derivatives of $f$, such as the one calculated at the ends of the interval $[0,M]$, will be referred to as the derivative.
\end{remark}

The intuition being that during normal times, the price of the asset is one. However, during a fire sale the price of the illiquid asset is artificially depressed due to the lack of liquidity. Hence the book price of the asset is one, assuming the bank does not need to liquidate it, otherwise, the liquidation price will be set to be $q\in(0,1]$. 
Further, we incorporate the notion that the price drops slower at lower prices. Finally, if a bank were to sell an extra unit, it would obtain positive, but decreasing marginal, cash. This is consistent with the construction of an order book.


In this project the liquidation rule of \cite{AFM16} will be used.  That is, a firm must liquidate illiquid assets in order to have enough reserves to satisfy its liabilities, i.e., 
$s = a \wedge \frac{\bar p - c - \Pi^\T p}{q}.$
We note that, under a regular network \cite[Definition 5]{EN01} and Assumption~\ref{ass:idf}, the joint equilibrium between payments $p$ and prices $q$ is unique with this liquidation rule.

\section{Optimal Tradeoff between Fire Sales and Borrowing}\label{sec:fire-sale}
We now formulate an optimization problem from each bank's perspective. The bank may face a cash shortfall and has two avenues to raise funds: partially liquidate their holding in the illiquid asset, or borrow additional funds ($\ell_i$) in addition to $L_{i0}$ and temporarily delay the sale of the asset. The price-to-book (P/B) ratio is used as a proxy for the confidence in the bank. It should be noted that both the sale of the asset and borrowing of funds affect the price-to-book ratio. 

Assume that bank $i$'s interest rate is $r_i$, which may be a function of parameters such as the LIBOR rate. For example, these parameters include quantifiable proxies of confidence as suggested in \cite{bichuch}. However, since the model is static, these parameters will be assumed to be fixed, and a static interest rate will be assumed.
Additionally, it is convenient to define 
$s_{-i} = \sum_{j=1,j\ne i}^n s_j.$
Thus there are three cases to consider for each bank  $i$:
\begin{description}
\item[Case I:] {\it Bank $i$ is fundamentally insolvent.} In this case, the book value of the bank is below its obligations, that is, even if the bank were to sell its illiquid asset $a_i$ at the hypothetical price $f(0) = 1$ that prevails during normal, non-fire sale times it will still default on its debt. That is
\begin{align}
\bar p_i >  c_i + a_i + \sum_{j = 1}^n L_{ji}.
\label{eq:case2}
\end{align}
There is no borrowing in this case, as it will be assumed that it is imprudent to lend money to a fundamentally insolvent bank. Including a notion of bankruptcy costs as in, e.g., \cite{RV13,CK18}, it is assumed that for the duration of the crisis being studied, no obligations will be paid by any fundamentally insolvent banks, i.e.\ $p_i = 0$ for any bank in Case I. Such banks will not participate in fire sales as their assets will be distributed or liquidated in bankruptcy court at a later time.  As we allow all other banks to borrow to cover their liabilities, we will assume $p_i = \bar p_i$ for all banks that are \emph{not} in Case I.
\item[Case II:] {\it Bank $i$ is solvent with no borrowing nor asset liquidation is required.} This is the case if
\begin{align}
\bar p_i  \le  c_i + \sum_{j = 1}^n \pi_{ji} p_j.
\label{eq:case1}
\end{align}
\item[Case III:] {\it Bank $i$ is fundamentally solvent with borrowing and asset liquidations required.}  This is the case if
\begin{align}
c_i + \sum_{j = 1}^n \pi_{ji} p_j < \bar p_i \le  c_i + a_i + \sum_{j = 1}^n L_{ji}.
\label{eq:case3}
\end{align}
In this case, the bank can decide how much to borrow and liquidate in order to optimize its cash flow, i.e.\ minimize expenses due to the interest payment and the loss due to fire sale. In this case, it will be assumed that the bank can borrow funds as needed. The set of all such banks will be denoted $C_3$. 
\end{description}
Borrowing can, and will, only happen in the last case, when the bank is fundamentally solvent and at the asset's nominal price of $1$.  As such, the remainder of this section, and much of this paper will focus on banks in Case III.
Namely, bank $i$ in Case III seeks to optimally liquidate
\begin{align}
s_i^*(s_{-i}) = \argmin_{s\in[0,a_i]}  s (1-f(s_{-i} + s)) +  r_i \(h_i -s f(s_{-i} + s) \)^{+} 
\label{optim-tmp}
\end{align}
units of the illiquid asset
where $h_i = \bar p_i - \(c_i  + \sum_{j}\pi_{ji} p_j\)$ is the liquid shortfall for bank $i$.  Note that the total required borrowing $\ell_i$ given sales of $s_i$ is $\ell_i=(h_i - s_i f(s_{-i} + s_i))^+$. We are now ready to show the existence of Nash equilibrium. The proof of the theorem is given in Appendix \ref{app1}.

\begin{theorem}[Existence of Nash Equilibrium]\label{thm:exist}
Under Assumption \ref{ass:idf} there exists Nash equilibrium liquidating strategy $s^{**} = \(s_1^*(s_{-1}^{**}) , ... , s_n^*(s_{-n}^{**})\)^\T \in \prod_{i = 1}^n [0,a_i]$ with resulting equilibrium price $q^{**} = f\(\sum_{i = 1}^n s_i^{**}\).$
\end{theorem}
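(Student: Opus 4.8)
The plan is to cast this as a standard existence problem for a pure-strategy Nash equilibrium and to apply Brouwer's fixed point theorem to the joint best-response map $T(s) = \(s_1^*(s_{-1}),\ldots,s_n^*(s_{-n})\)$ on the strategy space $\prod_{i=1}^n [0,a_i]$. This set is nonempty, compact, and convex in $\R^n$, and for each fixed $s_{-i}$ the objective in \eqref{optim-tmp} is continuous in $s$, so a minimizer over $[0,a_i]$ exists by Weierstrass. It then remains to show (i) that the minimizer is unique, so $T$ is single-valued, and (ii) that $T$ is continuous; Brouwer then produces a fixed point $s^{**}$, which is precisely a Nash equilibrium, with price $q^{**}=f\(\sum_i s_i^{**}\)$ by \eqref{eq:q}.

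For uniqueness I would establish strict convexity of the per-bank objective in the bank's own sale $s$. Setting $R(s) = s f(s_{-i}+s)$, the cost in \eqref{optim-tmp} reads $\big(s - R(s)\big) + r_i\big(h_i - R(s)\big)^+$. The map $x \mapsto (h_i - x)^+$ is convex and nonincreasing, so composing it with a concave $R$ (and scaling by $r_i\ge 0$) yields a convex borrowing term, while $s - R(s)$ is strictly convex as soon as $R$ is strictly concave. Hence the whole objective is strictly convex in $s$ once $R$ is shown to be strictly concave, which forces a unique minimizer $s_i^*(s_{-i})$.

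The strict concavity of $R$ is the main technical point, and it is exactly where Assumption \ref{ass:idf} is used. Writing $u := s_{-i}+s$, which lies in $[0,M]$ since total sales cannot exceed $\sum_j a_j \le M$, and differentiating twice in $s$ gives $R''(s) = 2f'(u) + s f''(u)$. Because $f'$ is nondecreasing we have $f'' \ge 0$, and since $s \le u$ this yields the estimate $R''(s) \le 2f'(u) + u f''(u) = \frac{d^2}{du^2}\big(u f(u)\big) < 0$, the final inequality being the strict negativity guaranteed by Assumption \ref{ass:idf}. I expect this step to be the crux: the assumption supplies strict concavity of $u \mapsto u f(u)$ at the \emph{aggregate} argument $u$, and the monotonicity $f'' \ge 0$ together with $s \le u$ is precisely what transfers that negativity to $R''(s)$ at the individual sale $s$.

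Finally, continuity of $T$ follows from Berge's maximum theorem: the objective is jointly continuous in $(s,s_{-i})$ and the constraint set $[0,a_i]$ is a fixed (hence continuous) correspondence, so the argmin correspondence is upper hemicontinuous; being single-valued by the strict convexity above, it is a continuous function of $s_{-i}$. Assembling the coordinatewise best responses gives a continuous self-map of the compact convex product $\prod_i[0,a_i]$, and Brouwer's theorem delivers the equilibrium $s^{**}$, completing the argument.
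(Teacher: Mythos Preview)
Your argument is correct and rather cleaner than the paper's own proof. Both proofs end with Brouwer's theorem applied to the best-response map on $\prod_i[0,a_i]$, but the routes to single-valuedness and continuity differ. You obtain strict convexity of the objective in $s$ via the neat decomposition $\phi(s)=\big(s-R(s)\big)+r_i\big(h_i-R(s)\big)^+$ with $R(s)=s f(s_{-i}+s)$, using $f''\geq 0$ together with $s\leq u$ to transfer the assumed strict negativity of $2f'(u)+uf''(u)$ down to $R''(s)=2f'(u)+sf''(u)$; Berge then delivers continuity. The paper instead works out the best response explicitly, showing that for $i\in C_3$ one has $s_i^*(s_{-i})=\min\big(s_i^L(s_{-i}),\,s_i^0(s_{-i}),\,a_i\big)$ when $f(s_{-i})\geq(1+r_i)^{-1}$ and $s_i^*(s_{-i})=0$ otherwise, and then checks continuity of this piecewise expression by hand. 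Your approach is more economical for existence; the paper's explicit characterization, however, is what the later results rely on (the sensitivity analysis in Theorem~\ref{thm:unique}, the formula \eqref{eq:s*} underlying Algorithm~\ref{alg:unique}, and the symmetric case study), so it is doing double duty there. A minor point: for firms in Cases~I and~II the best response is set to $0$ by modeling convention rather than by solving \eqref{optim-tmp}, so strictly speaking your Brouwer map should be defined coordinatewise as $s_i^*(s_{-i})$ from \eqref{optim-tmp} for $i\in C_3$ and $0$ otherwise; this does not affect the argument, as the zero coordinates are trivially continuous.
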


We now turn our attention to uniqueness of the Nash equilibrium strategy.  In fact, we can reformulate the optimal liquidation problem by the equivalent minimization problem
\begin{align}
s_i^*(s_{-i}) &= \argmin_{s \in [0,a_i]} \left\{s(1 - f(s_{-i} + s)) + r_i \(h_i - s f(s_{-i} + s)\) \; | \; s f(s_{-i} + s) \leq h_i\right\}, ~i \in C_3. \label{eq:Nash1}
\end{align}
Now let's consider a modified equilibrium problem between liquidations and the resultant prices, where the price in the constraint is replaced with the variable $q \in (0,1]$.
\begin{align}
s_i^{\dagger}(s^{\ddagger},q) &= \argmin_{s \in [0,\min\{a_i,\frac{h_i}{q}\}]} s\(1 - f\(\sum_{j \neq i} s_j^{\ddagger} + s\)\) + r_i \(h_i - s f\(\sum_{j \neq i} s_j^{\ddagger} + s\)\), ~i \in C_3.\label{eq:Nash3}
\end{align}
The following theorem  now asserts the uniqueness of the Nash Equilibrium. The proof is given in Appendix \ref{app2}.
\begin{theorem}[Uniqueness of Nash Equilibrium]\label{thm:unique}
Under Assumption~\ref{ass:idf} and for a fixed price $q \in [f(M),1]$ there exists a unique equilibrium liquidation strategy $\bar s(q) = s^{\dagger}(\bar s(q),q)$.  Additionally, if 
\begin{align}
f'(s) + s f''(s) \leq 0 \mbox{ and }-M f'(0) < f(M)
\label{eq:add-conditn}
\end{align} 
then 
there exists a unique joint liquidation-price equilibrium $s^{\ddagger} = s^{\dagger}(s^{\ddagger},q^{\ddagger})$ and $q^{\ddagger} = f(\sum_{j = 1}^n s_j^{\ddagger})$.
\end{theorem}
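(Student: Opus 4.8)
The plan is to treat \eqref{eq:Nash3} as an aggregative (Cournot-type) game and to reduce both existence and uniqueness to a one-dimensional statement about the total liquidation $Q:=\sum_j s_j$. Writing $c_i:=\tfrac{1}{1+r_i}$ and expanding the objective in \eqref{eq:Nash3}, firm $i$'s problem is equivalent to maximizing the Cournot profit $s\,(f(Q)-c_i)$ over $s\in[0,\bar s_i(q)]$ with $\bar s_i(q):=\min\{a_i,h_i/q\}$ and $Q=\sum_{j\neq i}s_j^{\ddagger}+s$. First I would record that this profit has own second derivative $2f'(Q)+s\,f''(Q)$, which is strictly negative because $f''\ge0$ gives $2f'(Q)+s f''(Q)\le 2f'(Q)+Q f''(Q)<0$ for $s\le Q$ by Assumption~\ref{ass:idf}; equivalently the minimization objective is strictly convex in the own variable, so each best response is single-valued and (by Berge) continuous. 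This already yields existence of $\bar s(q)$ by Brouwer on $\prod_i[0,\bar s_i(q)]$.

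For uniqueness at fixed $q$ I would use the backward-reply map: given a candidate total $Q$, the interior first-order condition $f(Q)+s f'(Q)=c_i$ gives $\gamma_i(Q)=(c_i-f(Q))/f'(Q)$, clamped to $[0,\bar s_i(q)]$; by strict own-convexity $\gamma_i(Q)$ is exactly firm $i$'s best response when the aggregate equals $Q$, so equilibria correspond to zeros of $g(Q):=\sum_i\gamma_i(Q)-Q$. The key step is to show $g$ is strictly decreasing at each of its zeros. Summing the first-order conditions over the active set $A$ (the clamped firms contribute constants since $q$ is fixed) and differentiating gives, at a zero, $g'(Q)=-|A|-1+\dfrac{Q_A f''(Q)}{|f'(Q)|}$, where $Q_A$ is the active firms' total. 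Since $Q_A\le Q$ and $2f'(Q)+Qf''(Q)<0$ forces $Qf''(Q)<2|f'(Q)|$, the curvature term is $<2\le|A|+1$ whenever $|A|\ge1$ (and $g'(Q)=-1$ if $A=\varnothing$); thus $g'(Q)<0$ at every zero, so $g$ has a single zero $Q^{*}$ and $\bar s(q)$ is unique. This step uses only Assumption~\ref{ass:idf}: the multiplicity of firms (the $-|A|$ term) absorbs the curvature, which is why no extra hypothesis is needed here.

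For the joint equilibrium I would observe that a fixed point of \eqref{eq:Nash3} together with $q=f(Q)$ is exactly a Nash equilibrium of the original problem \eqref{eq:Nash1}, whose existence is Theorem~\ref{thm:exist} (alternatively, a fixed point of the continuous self-map $q\mapsto f(\sum_i\bar s_i(q))$ of $[f(M),1]$ by the intermediate value theorem, using $\sum_i\bar s_i(q)\le M$). Uniqueness is the crux. I would repeat the backward-reply analysis with the price now endogenous, so a clamped firm's capacity becomes $h_i/f(Q)$, which \emph{loosens} as $Q$ grows; the zero-condition then reads $g'(Q)=-|A|-1+\dfrac{Q_A f''(Q)}{|f'(Q)|}+\dfrac{|f'(Q)|}{f(Q)}\,Q_K$, where $Q_K=\sum_{i\in K}\gamma_i(Q)$ is the capacity firms' total and the last term comes from differentiating $h_i/f(Q)$ (using $h_i=\gamma_i f(Q)$ at capacity). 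The two hypotheses in \eqref{eq:add-conditn} are precisely what control the two positive terms: $f'(Q)+Qf''(Q)\le0$ (with $Q_A\le Q$) gives $Q_Af''(Q)/|f'(Q)|\le1$, while $-Mf'(0)<f(M)$ together with $|f'(Q)|\le|f'(0)|$, $f(Q)\ge f(M)$ and $Q_K\le M$ gives $|f'(Q)|Q_K/f(Q)<1$; hence $g'(Q)<-|A|-1+1+1\le0$ at every zero, yielding a unique total $Q^{\ddagger}$, hence a unique $q^{\ddagger}=f(Q^{\ddagger})$ and $s^{\ddagger}$.

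The hard part is this last monotonicity argument: once the price is endogenous the capacity constraints loosen with $Q$, and the naive bound $Q_Af''/|f'|<2$ available from Assumption~\ref{ass:idf} alone no longer leaves room for the capacity term, so the sharper curvature bound $f'+Qf''\le0$ and the size bound $-Mf'(0)<f(M)$ are both essential. Two points will require care in the write-up: that the clamped backward reply $\gamma_i$ genuinely coincides with the best response (a consequence of strict own-convexity/the second-order condition), and that $g$ is strictly decreasing through the kinks where the active set $A$ and the capacity set $K$ change, which holds because the one-sided derivatives on either side are each given by the formula above with the appropriate $A,K$ and are both negative.
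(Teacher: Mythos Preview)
Your proposal is correct and takes a genuinely different route from the paper.  The paper proves fixed-$q$ uniqueness by invoking Rosen's theorem via a separate lemma showing that $H(s;\rho)=\sum_i \rho_i[s_i(1-f(\sum_j s_j))+r_i(h_i-s_i f(\sum_j s_j))]$ is diagonally strictly convex for the weights $\rho_i=(1+r_i)^{-1}$; for the joint equilibrium it then shows that $\Phi(q)=f(\sum_j \bar s_j(q))$ is a contraction, which entails implicitly differentiating the system for $\bar s(q)$, inverting the resulting matrix $I+\operatorname{diag}(d)({\bf 1}_{n\times n}-I)$ (with $d_i=-(s_i^0)'\in[0,1)$) via Sherman--Morrison, and bounding ${\bf 1}_n^\top\bar s'(q)$ by $M/f(M)$.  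You instead exploit the aggregative (Cournot) structure directly: after the change of variables the objective is $-(1+r_i)s(f(Q)-c_i)+\text{const}$, so the backward reply $\gamma_i(Q)$ reduces both the fixed-$q$ and the joint problem to a one-dimensional root of $g(Q)=\sum_i\gamma_i(Q)-Q$.  Your argument is more elementary---no Rosen, no matrix algebra---and makes very transparent where each half of \eqref{eq:add-conditn} enters: $f'+sf''\le0$ caps the curvature term $Q_Af''/|f'|\le1$, while $-Mf'(0)<f(M)$ caps the capacity term $|f'|Q_K/f(Q)<1$ coming from the $Q$-dependent bound $h_i/f(Q)$.  The paper's approach, on the other hand, is less tied to the aggregative structure (Rosen would survive more general couplings), and its contraction step directly yields the convergent outer loop of Algorithm~\ref{alg:unique}, which your argument does not immediately supply.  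Two small points to tighten in the write-up: make the case $|A|=0$ explicit in the joint bound (then $Q_A=0$ and $g'(Q)<-1+1=0$, so the conclusion still holds), and when handling the kinks, note also the possible switch between the $a_i$-cap (derivative $0$) and the $h_i/f(Q)$-cap (derivative $\gamma_i|f'|/f$), not only the active/inactive transitions.
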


In the following examples, we show that the conditions \eqref{eq:add-conditn} of Theorem \ref{thm:unique} are true for a broad range of inverse demand functions $f$.  

\begin{example}\label{ex1}\textit{Linear price impact}:
$f(s) = 1 - \alpha s$ for $0 < \alpha < \frac{1}{2M}$ satisfies all conditions for an inverse demand function in Assumption \ref{ass:idf}.
The first condition of \eqref{eq:add-conditn} is true for $\alpha > 0$. While the second condition of \eqref{eq:add-conditn} is true when $\alpha < \frac{1}{2M}$.  That is, existence and uniqueness are guaranteed if $0 < \alpha < \frac{1}{2M}$.

\end{example}
\begin{example}\label{ex2}\textit{Exponential price impact}:
$f(s) = \exp(-\alpha s)$ for $0< \alpha < \frac{1}{M}$ satisfies all conditions for an inverse demand function in Assumption \ref{ass:idf}.  
The first condition of \eqref{eq:add-conditn} is true when $0 < \alpha \leq \frac1M.$ 
While the second condition of \eqref{eq:add-conditn} is true when $\alpha < \frac{W(1)}{M} \approx \frac{0.567}{M}$ where $W$ is the Lambert W function.  That is, existence and uniqueness are guaranteed if $0< \alpha < \frac{W(1)}{M}$.

\end{example}
\begin{example}\label{ex3}\textit{Hyperbolic price impact}:
$f(s) = \frac{\epsilon}{\epsilon + s}$ for $\epsilon > 0$ satisfies all conditions for an inverse demand function in Assumption \ref{ass:idf}.
The first condition of \eqref{eq:add-conditn} is true when $\epsilon \geq M.$ 
While the second condition of \eqref{eq:add-conditn} is true when $\epsilon > \frac{1 + \sqrt{5}}{2}M$.  That is, existence and uniqueness are guaranteed if $\epsilon > \frac{1 + \sqrt{5}}{2}M$.

\end{example}

We conclude this section by considering an algorithm to construct the clearing liquidations and prices under the conditions of Theorem~\ref{thm:unique}.  Briefly, the approach is to consider nested loops: a primary loop for the fixed point iterations in the price $q$ and a secondary loop to construct the equilibrium liquidation strategy $\bar s(q)$ using the approach taken in \cite{rosen65}. Additional details are given in Appendix \ref{app3}.
\begin{algorithm}\label{alg:unique}
Consider the setting of Theorem~\ref{thm:unique}.  The unique equilibrium liquidations $s^{\ddagger}$ and price $q^{\ddagger}$ can be found by the following algorithm.
\begin{enumerate}
\item Partition the firms into Cases I, II, and III.
\item Define the mapping $g$ and its Jacobian $G$ by
    \begin{align*}
    g(s) &= \diag\(\[\ind_{\{i \in C_3\}}\]_{i = 1,...,n}\)\(\hat g(s) - \(\diag(\ind_{\{s \leq 0\}})\hat g(s)^+ - \diag(\ind_{\{s \geq \min(a,h/q)\}})\hat g(s)^-\)\),\\ 
    \hat g(s) &= \diag(1+r)^{-1}\(1 - \diag(1+r)[f(\sum_{i = 1}^n s_i) + s f'(\sum_{i = 1}^n s_i)]\),\\
    G(s) &= -\diag\(\[\ind_{\{i \in C_3\}}\]_{i = 1,...,n}\)\((I+{\bf 1}_{n \times n})f'(\sum_{i = 1}^n s_i)+\diag(s) {\bf 1}_{n \times n} f''(\sum_{i = 1}^n s_i)\).
    \end{align*}
\item Initialize iteration $k = 0$, $s_i^k = 0$ for all firms $i = 1,...,n$, and $q^k = f(\sum_{i = 1}^n s_i^k) = 1$.
\item\label{alg:convergence} Repeat until convergence of $q^k$:
    \begin{enumerate}
    \item Iterate $k = k+1$.
    \item Initialize $s^k = s^{k-1}$ and $v = g(s^k)$. 
    \item Repeat until $v$ converges to $0$:
        \begin{enumerate}
        \item Set $t = -\frac{v^\T G(s^k) v}{\left\|G(s^k) v\right\|_2^2}$.
        \item Update $s_i^k = \min\(s_i^k + t v_i \; , \; a_i \; , \; h_i/q^k\)^+$ for all firms $i = 1,...,n$.
        \item Update $v = g(s^k)$.
        \end{enumerate}
    \item Set $q^k = f\(\sum_{i = 1}^n s_i^k\)$.
    \end{enumerate}
\item Define the equilibrium liquidations $s^{\ddagger} = s^k$ and price $q^{\ddagger} = q^k$.
\end{enumerate}
\end{algorithm}

\begin{figure}
\centering
\includegraphics[width=.7\textwidth]{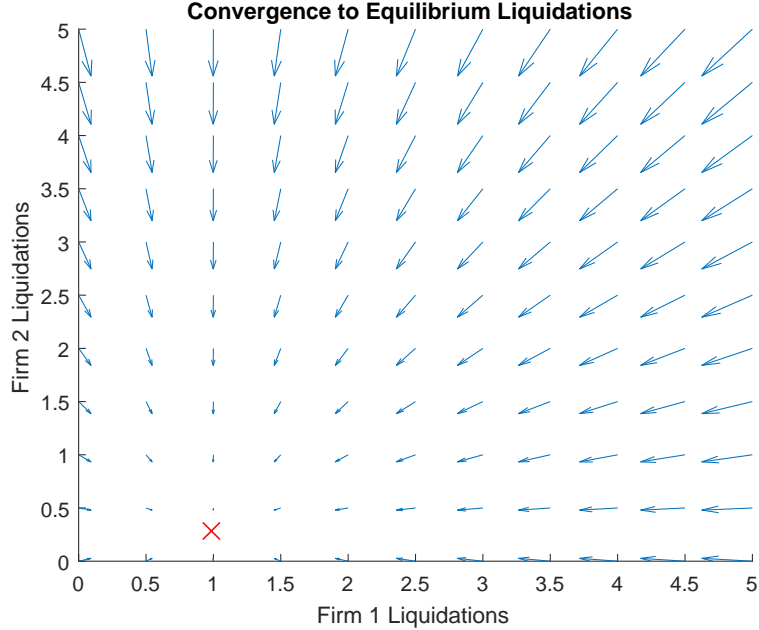}
\caption{A plot of speed and direction of the converging iterations of liquidations of two banks, based on the algorithm of \cite{rosen65}. The red `x' indicates the true equilibrium.}
\label{fig:alg}
\end{figure}
Figure \ref{fig:alg} graphically shows the convergence of the numerical iterations to find the Nash equilibrium in a two bank system with different initial points $s^0$.  For illustrative purposes we consider the network so that $h_1 = 4$, $h_2 = 2$, and $a_1 = a_2 = 5$, thus both firms are Case III institutions.  Further, we consider heterogeneous interest rates $r_1 = 12\%$ and $r_2 = 8\%$.  The market impacts are modeled by the linear inverse demand function $f(s) = 1-\alpha s$ with $\alpha = \frac{1}{21}$, which satisfies all assumptions of Theorem~\ref{thm:unique}.  Given the initial liquidations $(s_1^0,s_2^0) \in [0,5]^2$, the initial price $q^0 = f(s_1^0 + s_2^0)$ is found and the updated liquidations $\bar s(q^0)$ are computed.  We note that, all updated liquidations are in the direction of the clearing solution. The unique Nash equilibrium is shown as the red `x'.

\section{Optimal Fire Sales with Collateralized Borrowing}\label{sec:repo}
One of the key assumptions of the previous section was that banks can borrow funds without the need for collateral. In other words, once the determination is made that the bank is fundamentally solvent, there is no restriction on the size of the loan. The original interpretation can be that the loan is made by the lender of last resort, who initially determines if the bank is solvent or not. Once it is determined that the bank is solvent, the bank is allowed to borrow as needed. 

A more realistic and better alternative to consider is that the short-term loan needs to be collateralized, and that this collateral is in the form of the illiquid asset. The collateralized value of the illiquid asset will be assumed to be one. This is done in order to guarantee that a firm doing collateralized borrowing
will have positive equity using the  book value price for the illiquid asset.  More precisely, in Section \ref{sec:fire-sale}, it is possible that a firm that is required to both liquidate a portion of its holdings and borrow will fundamentally have negative equity after it liquidates assets and then borrows additional cash, which would be in contrast to the notion of ``solvency'' as it is generally considered.  Thus the current setup demonstrates more completely that firms required to liquidate and borrow remain fundamentally solvent, and bounds their actions so as to keep such a firm in this solvency region.
More specifically, this splits the original third case to be as follows: 
\begin{description}
\item[Case I:] {\it Bank $i$ is fundamentally insolvent.} Remains the same as \eqref{eq:case2}.  Similar to the original Case I, $p_i=0$ in this case, and otherwise banks will pay in full.
\item[Case II:] {\it Bank $i$ is solvent with no borrowing nor asset liquidation is required.} Also remains the same as \eqref{eq:case1}.
\item[Case III:] {\it Bank $i$ is fundamentally solvent, with borrowing and asset liquidations, but unable to pass a stress test.}  This is the case if
\begin{align}
c_i + \sum_{j = 1}^n \pi_{ji} p_j < \bar p_i \le  c_i + a_i + \sum_{j = 1}^n L_{ji},~\text{and}~a_i (1-\nu) < h_i.
\label{eq:case3-repo}
\end{align}
The first condition is the same as in the original case III, but an additional condition is now added, that under a stress test scenario, when the illiquid asset loses a proportion $\nu\in(0,1)$ of its value because of a shock, the bank becomes insolvent. It will be assumed that in this scenario this bank will be taken over by the regulator, who will also honor the bank's obligations and will ultimately be sold to a solvent bank. In this scenario the illiquid asset will not be liquidated.
\item[Case IV:]  {\it Bank $i$ is fundamentally solvent with borrowing and asset liquidations, and able to pass a stress test.} 
This is the case if
\begin{align}
c_i + \sum_{j = 1}^n \pi_{ji} p_j < \bar p_i \le  c_i + a_i + \sum_{j = 1}^n L_{ji},~\text{and}~a_i (1-\nu) \ge h_i.
\label{eq:case4-repo}
\end{align}
In this case, the bank can decide how much to borrow in order to optimize its cash flow, and minimize expenses due to the interest payment and the loss due to fire sale. The maximum borrowing amount is constrained by the collateral requirement. The set of all such banks will be denoted $C_4$.
\end{description}
As in the prior section, the last case of banks (Case IV firms) are those of primary interest for us. 
In this case, the value of the collateral will be assumed to be the asset's book price, i.e., one. Thus, for bank $i \in C_4$,  the maximum amount that can be borrowed is the solution to
\begin{align}
s^b_i(s_{-i}) (1 - f(s_{-i} +s^b_i(s_{-i}))) = a_i - h_i.
\label{eq:max-borrow}
\end{align}
The solution $s^b_i(s_{-i})$ to \eqref{eq:max-borrow} is unique if it exists, otherwise set $s_i^b(s_{-i}) = \infty$.
%
%
%

As such, the new problem that replaces \eqref{eq:Nash1} for bank $i$ in Case IV is to optimize the tradeoff in the number of sold shares and borrowing:
\begin{align}
s_i^*(s_{-i}) &= \argmin_{s \in [0,a_i]} \left\{s(1 - f(s_{-i} + s)) + r_i \(h_i - s f(s_{-i} + s)\) \; | \; s f(s_{-i} + s) \leq h_i,~s \le s^b_i(s_{-i})\right\}. \label{eq:optim}
\end{align}

The additional constraint in \eqref{eq:optim} as compared with \eqref{eq:Nash1} necessitates only trivial modifications to the proof of existence as done in Theorem \ref{thm:exist}. The following theorem is the analogue of Theorem \ref{thm:unique}, the proof of which is presented in Appendix \ref{app4}.

\begin{theorem}[Uniqueness of Nash Equilibrium]\label{thm:unique-repo}
Under Assumption~\ref{ass:idf} and for a fixed price $q \in [f(M),1]$ there exists a unique equilibrium liquidation strategy $\bar s(q) = s^{\dagger}(\bar s(q),q)$ where $s_i^{\dagger}(s^{\ddagger},q) = 0$ for $i \not\in C_4$ and 
\begin{align}
s_i^{\dagger}(s^{\ddagger},q) = \argmin\limits_{s \in [0,\min\{a_i,\frac{h_i}{q}, \frac{a_i - h_i}{1-q}\}]} s\(1 - f\(\sum_{j \neq i} s_j^{\ddagger} + s\)\) + r_i \(h_i - s f\(\sum_{j \neq i} s_j^{\ddagger} + s\)\) \quad \text{for } i \in C_4.
\label{eq:optim2}
\end{align}
Additionally, if 
$f'(s) + s f''(s) \leq 0 \mbox{ and } -M f'(0) < \nu \wedge f(M) $
then there exists a unique joint liquidation-price equilibrium $s^{\ddagger} = s^{\dagger}(s^{\ddagger},q^{\ddagger})$ and $q^{\ddagger} = f(\sum_{j = 1}^n s_j^{\ddagger})$.
\end{theorem}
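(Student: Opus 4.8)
The plan is to follow the proof of Theorem~\ref{thm:unique} line by line, isolating the two points at which the collateral bound $s\le s_i^b(s_{-i})$ enters. Throughout, observe that for $i\in C_4$ the objective in \eqref{eq:optim2}, namely $s(1-f(\textstyle\sum_j s_j))+r_i(h_i-s f(\sum_j s_j))=s+r_ih_i-(1+r_i)\,s\,f(\sum_j s_j)$, is identical to the one in \eqref{eq:Nash3}; the fixed price $q$ and the collateral requirement enter only through the right endpoint $u_i(q):=\min\{a_i,\,h_i/q,\,(a_i-h_i)/(1-q)\}$ of the feasible interval.

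\emph{Fixed price.} For the first assertion I would reproduce the argument of Theorem~\ref{thm:unique} verbatim. The Case~IV inequality $a_i(1-\nu)\ge h_i$ gives $a_i-h_i\ge\nu a_i>0$, so $(a_i-h_i)/(1-q)>0$ and each $[0,u_i(q)]$ is a nonempty compact interval; the extra bound merely shrinks these convex sets. Since the objectives are untouched, $\partial_{s_i}^2\phi_i=-(1+r_i)\big(2f'(\sum_j s_j)+s_if''(\sum_j s_j)\big)>0$ by Assumption~\ref{ass:idf} (using $f''\ge0$ and $s_i\le\sum_j s_j\le M$), so each objective is strictly convex, and the pseudo-gradient $\hat g$ and Jacobian $G$ of Algorithm~\ref{alg:unique} coincide with those of Theorem~\ref{thm:unique}. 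Rosen's diagonal strict convexity criterion \cite{rosen65}, i.e.\ positive definiteness of the symmetric part of $G$ under Assumption~\ref{ass:idf}, then delivers the unique inner equilibrium $\bar s(q)$.

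\emph{Joint equilibrium.} Writing $S(q):=\sum_{i=1}^n \bar s_i(q)\in[0,\sum_i a_i]\subseteq[0,M]$, the map $\Phi(q):=f(S(q))$ sends $[f(M),1]$ into itself; continuity of $q\mapsto\bar s(q)$ (from uniqueness in the inner game) together with a self-map/Brouwer argument yields a fixed point $q^\ddagger=f(S(q^\ddagger))$. For uniqueness I would show $\Phi$ is a contraction. Partition the firms at $\bar s(q)$ into those at an interior optimum and those pinned at $h_i/q$, at $(a_i-h_i)/(1-q)$, or at $a_i$. The first hypothesis $f'(s)+sf''(s)\le0$ forces every interior best response to be nonincreasing in the aggregate, since $d\bar s_i/dS=-1-\bar s_i f''(S)/f'(S)\le0$ (the inequality being exactly $\bar s_i f''(S)\le -f'(S)$, which follows from $\bar s_i\le S$ and $f'(S)+Sf''(S)\le0$); hence the interior responses damp rather than amplify and $|S'(q)|\le\max\{B_1,B_2\}$, where $B_1=\sum h_i/q^2$ runs over firms pinned at $h_i/q$ and $B_2=\sum(a_i-h_i)/(1-q)^2$ over firms pinned at the collateral bound. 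Where $h_i/q$ binds one has $h_i/q\le a_i$, so $B_1\le M/f(M)$; where the collateral bound binds one has $(a_i-h_i)/(1-q)\le a_i$ together with $a_i-h_i\ge\nu a_i$, so $B_2\le M/\nu$. With $|f'(S)|\le -f'(0)$ this yields $|\Phi'(q)|\le -Mf'(0)/(\nu\wedge f(M))<1$ precisely under the stated hypothesis, and Banach's theorem gives uniqueness (and existence).

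The crux is this final estimate. In contrast to Theorem~\ref{thm:unique}, the collateral endpoint $(a_i-h_i)/(1-q)$ has derivative $(a_i-h_i)/(1-q)^2$, which blows up as $q\uparrow1$ and could destroy the contraction; the saving observation is that this endpoint can be the binding one only while $q\le h_i/a_i$, which keeps $1-q$ bounded away from $0$, and the stress-test inequality $a_i-h_i\ge\nu a_i$ then converts the derivative into the clean bound $B_2\le M/\nu$. This is exactly the origin of the additional requirement $-Mf'(0)<\nu$ layered on top of the $-Mf'(0)<f(M)$ inherited from the uncollateralized case. A secondary technical point is that $S(\cdot)$ is only piecewise smooth, with kinks where the active set changes; there I would run the same bound on the one-sided derivatives, so that $\Phi$ stays Lipschitz with constant below $1$ on all of $[f(M),1]$.
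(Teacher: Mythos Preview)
Your proposal is correct and follows the paper's proof in structure: Lemma~\ref{lemma:diag-convex} and Rosen's theorem apply unchanged to the fixed-$q$ inner game since only the feasible boxes shrink, and for the joint equilibrium you show $\Phi(q)=f(\sum_j\bar s_j(q))$ is a contraction via exactly the paper's key observation---that the collateral endpoint $(a_i-h_i)/(1-q)$ can bind only while $q\le h_i/a_i$, whence $1-q\ge(a_i-h_i)/a_i\ge\nu$ and the bound $M/\nu$ follows. The one genuine difference is the sensitivity bookkeeping: the paper parameterizes interior responses by $s_{-i}$, obtaining $(s_i^0)'(s_{-i})\in(-1,0]$, writes $\bar s'(q)$ as a linear system, and inverts $B(d)=I+\diag(d)(\mathbf{1}_{n\times n}-I)$ explicitly via Sherman--Morrison before summing columns; your route parameterizes by the aggregate $S$, writing $\bar s_i'(q)=\psi_i'(S)\,S'(q)$ with $\psi_i'(S)=-1-\bar s_i f''(S)/f'(S)\in[-1,0]$, and reaches $S'(q)\bigl(1-\sum_{i\in I}\psi_i'\bigr)=C(q)$ in one line. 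The two are equivalent---one checks $d_i/(1-d_i)=-\psi_i'$, so the paper's damping factor $\bigl(1+\sum_k d_k/(1-d_k)\bigr)^{-1}$ is precisely your $\bigl(1-\sum_I\psi_i'\bigr)^{-1}$---and both give $|S'(q)|\le|C(q)|\le\max(B_1,B_2)\le M/(\nu\wedge f(M))$, using that the $h_i/q$ and collateral contributions to $C(q)$ carry opposite signs.
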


\begin{remark}
Due to the constraints on liquidations and the possibility of borrowing we can immediately conclude that the (unique) equilibrium price provided by collateralized borrowing is greater than that in the uncollateralized case, which in turn is above the pure fire sale setting of \cite{AFM16}. 
Economically, this recovers the observations of \cite{GM09,B09} that the freezing of the repo market can generate excess systemic risk since we would, in some sense, move from the equilibrium under collateralized borrowing to the pure fire sale setting of \cite{AFM16}.
\end{remark}

\section{Case Studies}\label{sec:casestudies}
\subsection{Symmetric Case}

\subsubsection{Uncollateralized Borrowing:}
The simplest standard example is to consider the case when all banks are symmetric, i.e. their positions and shortfall are the same (see, e.g., \cite{ACM10,AOT13}). It is then possible to calculate the Nash equilibrium in Example \ref{ex1} explicitly.
There are again three cases to consider:
If each bank is in either Case I (fundamentally insolvent) or Case II (solvent with no borrowing or asset liquidations required), then the equilibrium is straight forward, and no computation is required. 

The interesting case is if every bank is in Case III (solvent with borrowing and/or asset liquidation is required). This implies that $h > 0$. In this case, there are two scenarios to consider: The scenario when all banks only liquidate and do not borrow, and the scenario when they both liquidate and borrow.
For convenience, since all the banks are symmetric, the index $i$ can be dropped.  Consider the linear price impact $f(s) = 1 - \alpha s$ from Example \ref{ex1} with $\alpha \in (0,\frac{1}{2M})$. Then, in the first scenario, from \eqref{eq:optim-sol} assuming that there are $n$ banks, we have that
$s^{L}  \(1 -  n \alpha s^{L} \)  =h.$
Hence,
\begin{align}
s^{L}_{\pm} = \frac{1 \pm \sqrt{1 - 4\alpha n h}}{2\alpha n}.
\label{eq:s-L}
\end{align}
If $s^{L}_{-}$ is real, then $s^{L}_{-} > 0$. In this case, the root $s^{L}_{-}$ is chosen as it is the minimal solution and all banks will liquidate the least amount of assets they need to. For convenience this solution will simply be denoted as $s^L$.  If $s^{L}_{-}$ is not real, then neither is $s^{L}_+$ and, as described in the proof of Theorem~\ref{thm:exist}, we set $s^L = +\infty$.

The second scenario is when the banks both liquidate and borrow. In this scenario, from \eqref{eq:s-0} it follows that
\begin{align}
1 - (1+r)\( 1 - \alpha (n+1) s^{0}\) = 0.
\end{align}
Hence,
\begin{align}
s^{0} = \frac{r }{\alpha(n+1)(1+r)}.
\label{eq:s0}
\end{align}
These scenario are determined by the size of the liquid shortfall $h$.  Specifically
$s^{L} < s^{0}$ if and only if $h  < \frac{1-\(1-2n \frac{r}{(1+r)(n+1)}\)^2}{4\alpha n}.$
Note that in this scenario $f(ns^{0}) = \(1-\alpha n s^{0} \) > \frac1{1+r}$, and hence the banks will never undertake a pure borrowing strategy.

\subsubsection{Collateralized Borrowing:}
We will assume $\nu > 0$ is set so that all banks are in Case IV as this is the interesting scenario.
In the setting that the borrowing needs to be  collateralized, the above calculations largely hold.  Namely, \eqref{eq:s-L} and \eqref{eq:s0} still hold. We need to add the constraint \eqref{eq:max-borrow} that $s^L,s^0\le s^b$, where
\begin{align}
s^b = \sqrt\frac{a-h}{\alpha n}.
\label{eq:sb}
\end{align} 
We then have:
\begin{itemize} 
\item $s^L\leq s^0\wedge s^b$ if and only if $h\leq \frac{1-\(1-2n \frac{r}{(1+r)(n+1)}\)^2}{4\alpha n} \wedge a(1-\alpha n a)$,
\item $s^0\leq s^L\wedge s^b$ if and only if $\frac{1-\(1-2n \frac{r}{(1+r)(n+1)}\)^2}{4\alpha n} \leq h \leq a-\frac{n r^2}{\alpha  (n+1)^2 (1+r)^2}$, and 
\item $s^b \leq s^L\wedge s^0$ if and only if $h \geq \frac{1-\(1-2n \frac{r}{(1+r)(n+1)}\)^2}{4\alpha n} \vee a(1-\alpha n a)$.
\end{itemize}

\subsubsection{Numerics:}

In the following two case studies, we consider an asset with market capitalization $M = 100$ and linear inverse demand function $f(s) = 1 - \frac{s}{210}$.  In the collateralized framework of Section \ref{sec:repo} we need to introduce the stress test parameter $\nu$, which we choose so that all banks are in Case IV.  Though we take parameter values that violate the sufficient conditions of Theorem~\ref{thm:unique-repo}, we find the clearing liquidations and price anyway as we have explicitly computed the symmetric Nash equilibrium above.

The left plot of Figure \ref{fig:graphs2} illustrates the effect of interest rate $r$ on the Nash equilibrium clearing price $q^\ddagger$. 
The case considered here is a symmetric system with $n = 90$ banks, each with shortfall $h = 1$, and with the market capitalization of the asset be $M = 100$. It is assumed that the assets are evenly distributed among all the banks, i.e., $a = M/n$.  In both the uncollateralized and collateralized settings, the price drops at the rate $\frac1{1+r} + \frac{r}{(n+1)(1+r)}$.  In the uncollateralized setting, the curve reaches the kink and flattens only when the firm would want to sell more assets than they have, i.e., $s^0 \geq a$.  In the collateralized setting, the clearing price reaches the kink and flattens at a lower interest rate (and higher clearing price) as the firms are constrained in their liquidations by the collateralization constraint, i.e., the firms wish to liquidate more than allowable for the repo market $s^0 \geq s^b$. This is consistent with intuition, as in a low interest rate environment the banks choose to borrow more, and liquidate less, than in a higher interest rate environment. The collateralization requirement has the effect that banks cannot dispose of too much of the illiquid asset, as its price would then drop too much otherwise, and they would not be able to use sufficiently collateralize their loans. Hence, the price flattens sooner than in the no collateralization case where the price continues to drop further. 

The right plot of Figure \ref{fig:graphs2} illustrates the effect of the liquid shortfall $h$ on the Nash equilibrium clearing price $q^\ddagger$. In this case, the system consists of $n = 50$ symmetric  banks, each with interest rate of $r = 20\%$, and with the market capitalization of the illiquid asset be $M = 100$. It is assumed that the assets are evenly distributed among all the banks, i.e., $a = M/n$.  We note that we constrain the liquid shortfall by the individual firm's assets $a$ so that the firm is fundamentally solvent.  Both the collateralized and uncollateralized settings are equivalent up until the firms are nearly at the boundary of fundamental solvency, i.e., $h \approx a$.  It is when $h \approx a$ that the collateralization constraint $s^\ddagger \leq s^b$ is relevant.  In particular, this demonstrates that only firms that are at risk of failing a stress test would be impacted by a repo market, and as such firms would be undeterred by the need to post collateral. This is consistent with the goal of the collateralization to secure the deal. As shortfall increases, banks become more and more risky. The banks would become fundamentally insolvent if their shortfall exceeded $h=2$. Hence, as the shortfall approaches $h=2$, the banks cannot sell as many assets as they would have done in the uncollateralized case, and need instead to borrow more.  This supports the price of the asset and limits the effects of the fire sale, while keeping the loans collateralized and secure.

\begin{figure}
\centering
\begin{subfigure}[b]{.45\textwidth}
\includegraphics[width=\textwidth]{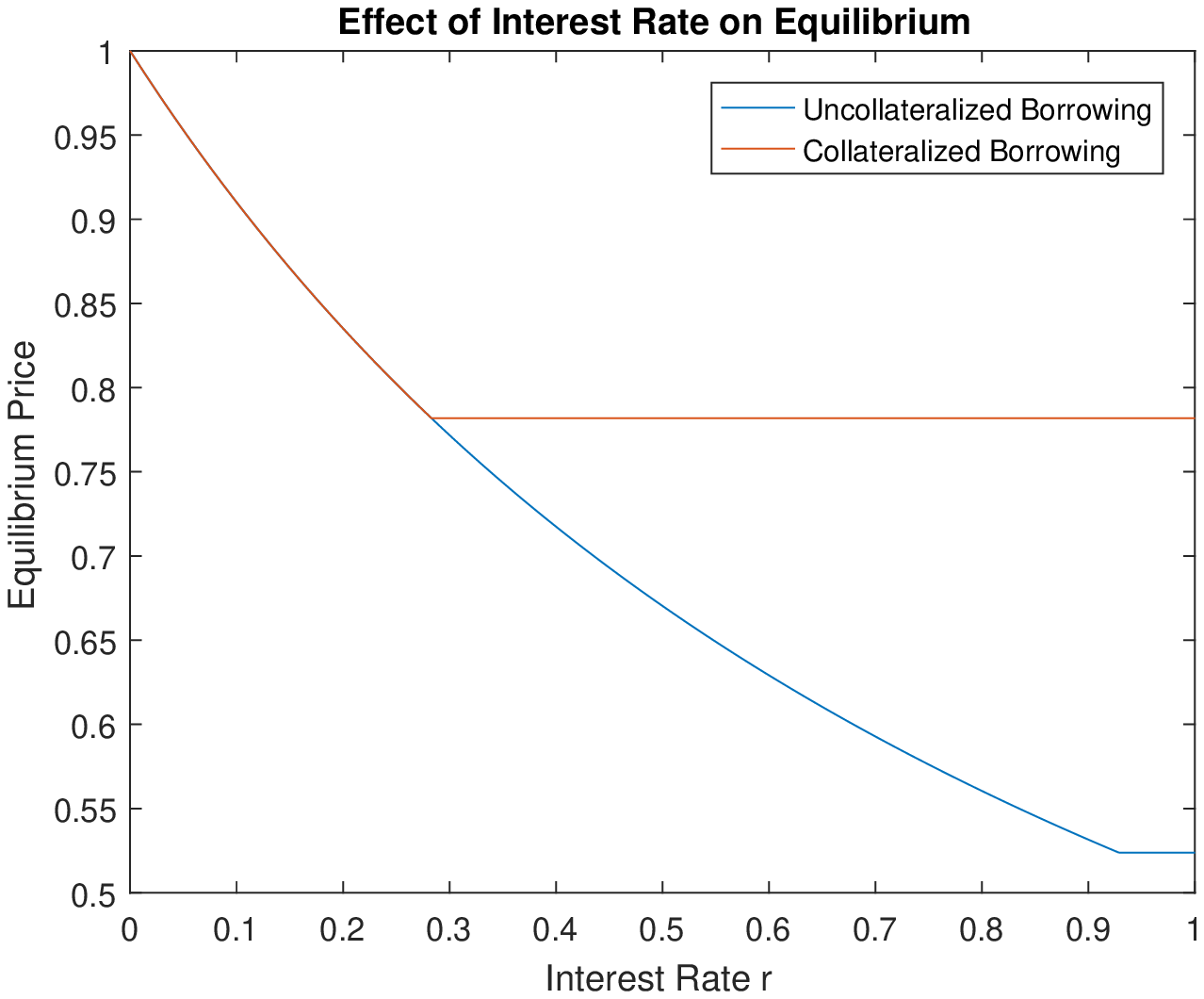}
\end{subfigure}
\begin{subfigure}[b]{.45\textwidth}
\includegraphics[width=\textwidth]{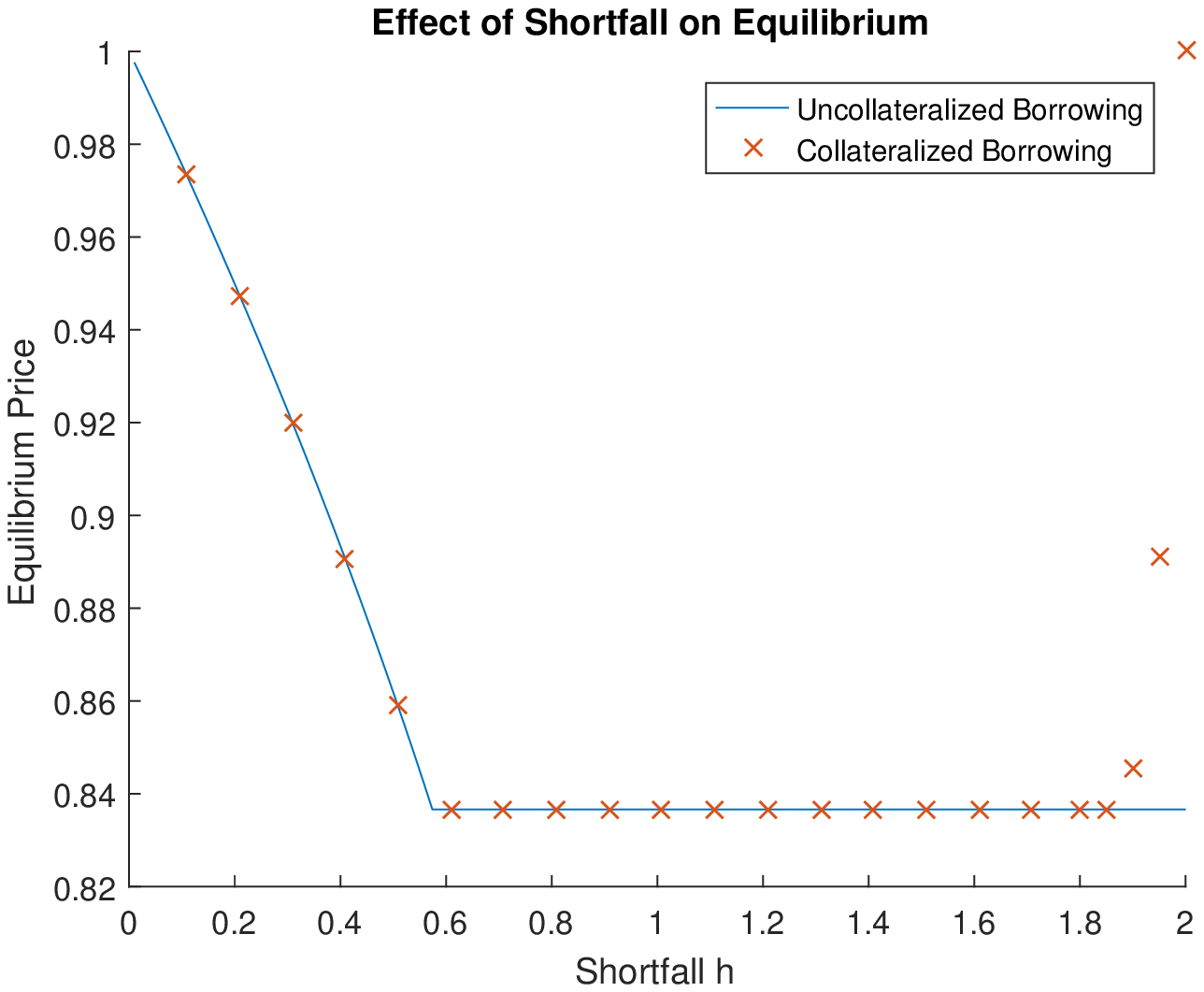}
\end{subfigure}
\caption{
Left: The effect of the interest rate $r$ on the clearing price in a symmetric financial system under both uncollateralized and collateralized borrowing.
Right:  The effect of the liquid shortfall $h$ on the clearing price in a symmetric financial system under both uncollateralized and collateralized borrowing.
}
\label{fig:graphs2}
\end{figure}

\subsection{EBA Case Study}
Let us now consider an example calibrated from 2011 European banking data that has been used in prior studies (e.g., \cite{GV16,CLY14}) under the financial contagion framework of \cite{EN01}.   We utilize the same network calibration as taken in \cite{feinstein2017multilayer}.  As in that work, we note that though we utilize this EBA dataset to have a more realistic network, the network calibration still requires heuristics and thus this example is still for demonstrative purposes only.\footnote{Due to complications with the calibration methodology, we only consider 87 of the 90 institutions. DE029, LU45, and SI058 were not included in this analysis.}

As a stylized bank balance sheet, we will consider three categories of assets: \emph{interbank assets} $\sum_{j = 1}^n L_{ji}$, \emph{liquid assets} $c_i$, and \emph{illiquid assets} $a_i$. We will additionally consider three categories of liabilities: \emph{interbank liabilities} $\sum_{j = 1}^n L_{ij}$, \emph{external liabilities} $L_{i0}$, and \emph{capital} $C_i$.  We refer back to Figure \ref{fig:balance_sheet} for demonstration of these terms. 

The dataset on European banks provides the total assets $T_i$, capital $C_i$, and interbank liabilities $\sum_{j = 1}^n L_{ij}$ for each bank $i$.  In order to use this dataset for our purposes we need to consider a few assumption.  First, as in \cite{CLY14,GY14}, we assume that the interbank liabilities equal interbank assets $\sum_{j = 1}^n L_{ij} = \sum_{j = 1}^n L_{ji}$ (with a small perturbation as required by the methodology of \cite{GV16}).  Additionally, as done in \cite{feinstein2017multilayer}, all assets not a part of the interbank assets are external assets (liquid and illiquid) and all liabilities not interbank nor capital are owed to the societal node $0$, i.e., external. Finally, the external assets are split into the component liquid and illiquid parts in proportion to the tier 1 capital ratio $R_i$. Under these assumptions, given the provided values, we determine the remainder of the stylized balance sheet via
\begin{align*}
c_i &= R_i\(T_i - \sum_{j = 1}^n L_{ij}\), \quad a_i = (1-R_i)\(T_i - \sum_{j = 1}^n L_{ij}\), \quad L_{i0} = T_i - \sum_{j = 1}^n L_{ij} - C_i, \quad \bar p_i = L_{i0} + \sum_{j = 1}^n L_{ij}.
\end{align*}
Under this calibration, the net worth of firm $i$ is equal to its capital, i.e., $C_i = T_i - \bar p_i$.
Finally, we construct the full nominal liabilities matrix $L$ using the methodology of \cite{GV16}.


In order to complete our model, we need to consider the remaining parameters of the system.  We set the market capitalization $M = \sum_{i = 1}^n a_i$ to be the total number of shares of the illiquid assets held by the banks.  Additionally we assume that all banks are subject to a $r = 5\%$ interest rate.  In this example we will focus solely on the linear inverse demand function $f(s) = 1 - \alpha s$ where we will vary $\alpha \in (0,\frac{1}{2M})$.  Finally, the stress test parameter is set to $\nu = 1\%$. As in the symmetric setting, though many of price impact levels $\alpha$ tested violate the conditions of Theorem \ref{thm:unique-repo} (i.e., $\alpha > \nu$), we find convergence of the numerical algorithms still hold and present the results herein.

We illustrate the two proposed enhancements to the simple liquidation strategy in a numerical study. The results are given in Figure \ref{fig:graphs}. The left figure illustrates both the equilibrium prices and the losses as a function of the price impacts. For this study these losses are defined as the losses of the illiquid asset due to fire sale, at a price below the original market price $f(0)=1$. The figure on the right on Figure \ref{fig:graphs} illustrates the number of insolvencies as a function of the price impacts. Both these figures were obtained in the linear inverse demand case, as given in Example \ref{ex1}. These figures are used to illustrate the amount of financial contagion in the three cases
-- the pure fire sale model of \cite{AFM16}, the uncollateralized borrowing case proposed in Section \ref{sec:fire-sale}, and the collateralized borrowing case considered in Section \ref{sec:repo}.

As expected, in all three cases, the equilibrium prices decline, as the price impact increases. This decline is most prominent in the basic case with no borrowing and only fire sales, and is very slight in either case when borrowing is allowed. This is also expected, as in both cases, when borrowing is allowed the banks will rely on borrowing more and more as the fire sale asset price declines. 
This is consistent with the graphs of the losses on the right axes. In both cases with borrowing, the losses become almost flat, as the price impact increases, also signifying that the banks rely more and more on borrowing, and are limiting the amount of capital lost due to the fire sale. This is of course not possible in a pure fire sale setting, in which case the losses increase steeply as the equilibrium price declines.
Additionally, we observe that the losses in the collateralized borrowing case are lower than the losses in the uncollateralized borrowing setting. This is contrary to intuition, as one 
expects that the losses would be larger in the collateralized borrowing setting, as the optimization performed by the banks in this case is more constrained than in the original uncollateralized setting. It turns out that in this case, because more constraints are introduced, the price remains higher than in the uncollateralized setting. As a consequence, the losses are smaller when collateralized borrowing is enforced.

The graph on the right of Figure \ref{fig:graphs} shows that once borrowing is introduced, neither uncollateralized nor collateralized borrowing experience any defaults or insolvencies.  In contrast, in the pure fire sale model of \cite{AFM16}, the number of defaults steadily rises as the price impact increases.  This occurs up to the point that all banks in the dataset are insolvent. This illustrates that our model of fire sales with borrowing more accurately encapsulates the health of the system than a pure fire sale model.  In particular, so long as borrowing markets remain liquid, we found that the European banking system would not be subjected to a systemic event in 2011 and all banks would survive (as evidenced in practice).

\begin{figure}
\centering
\begin{subfigure}[b]{.45\textwidth}
\includegraphics[width=\textwidth]{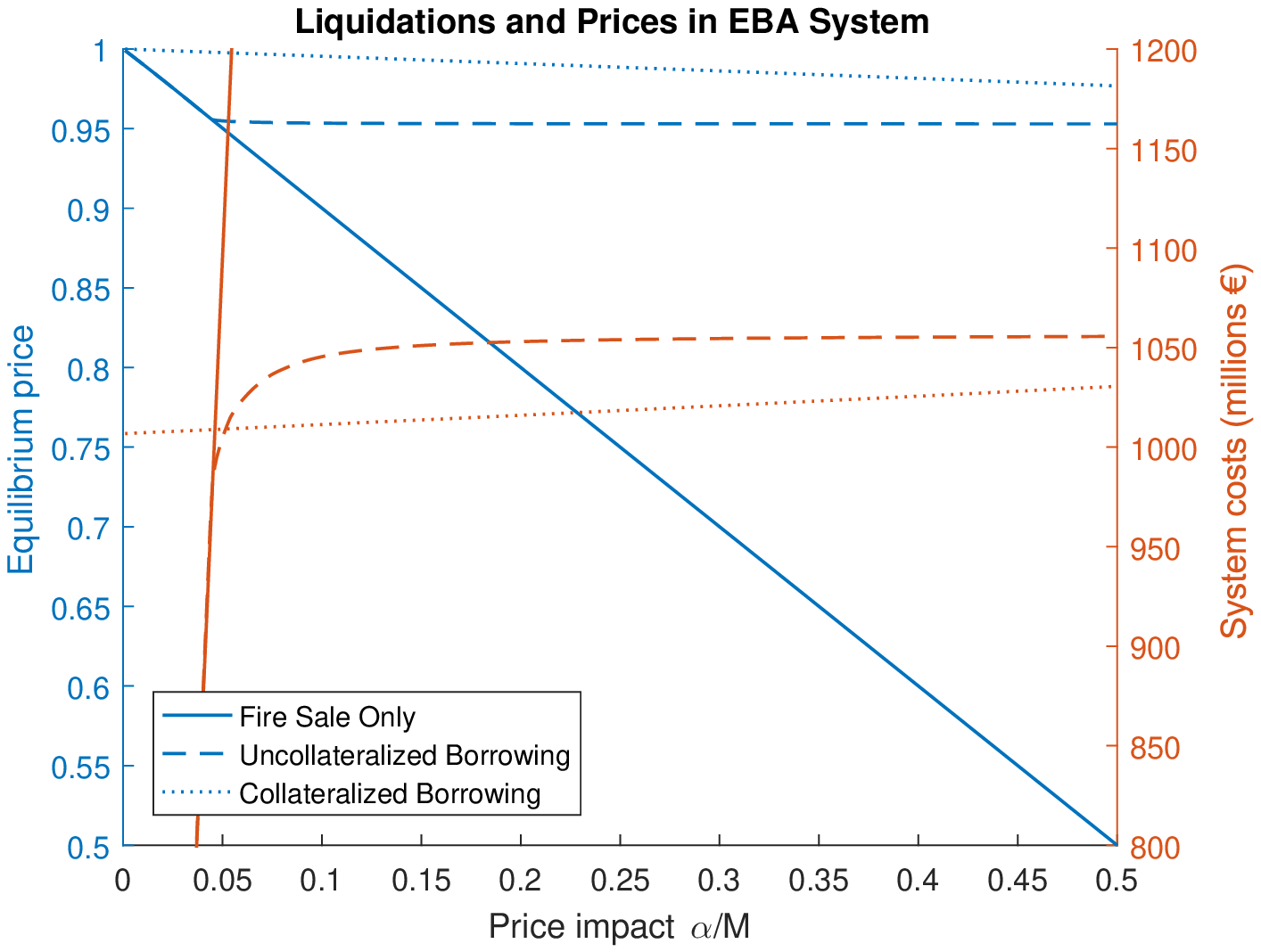}
\end{subfigure}
\begin{subfigure}[b]{.45\textwidth}
\includegraphics[width=\textwidth]{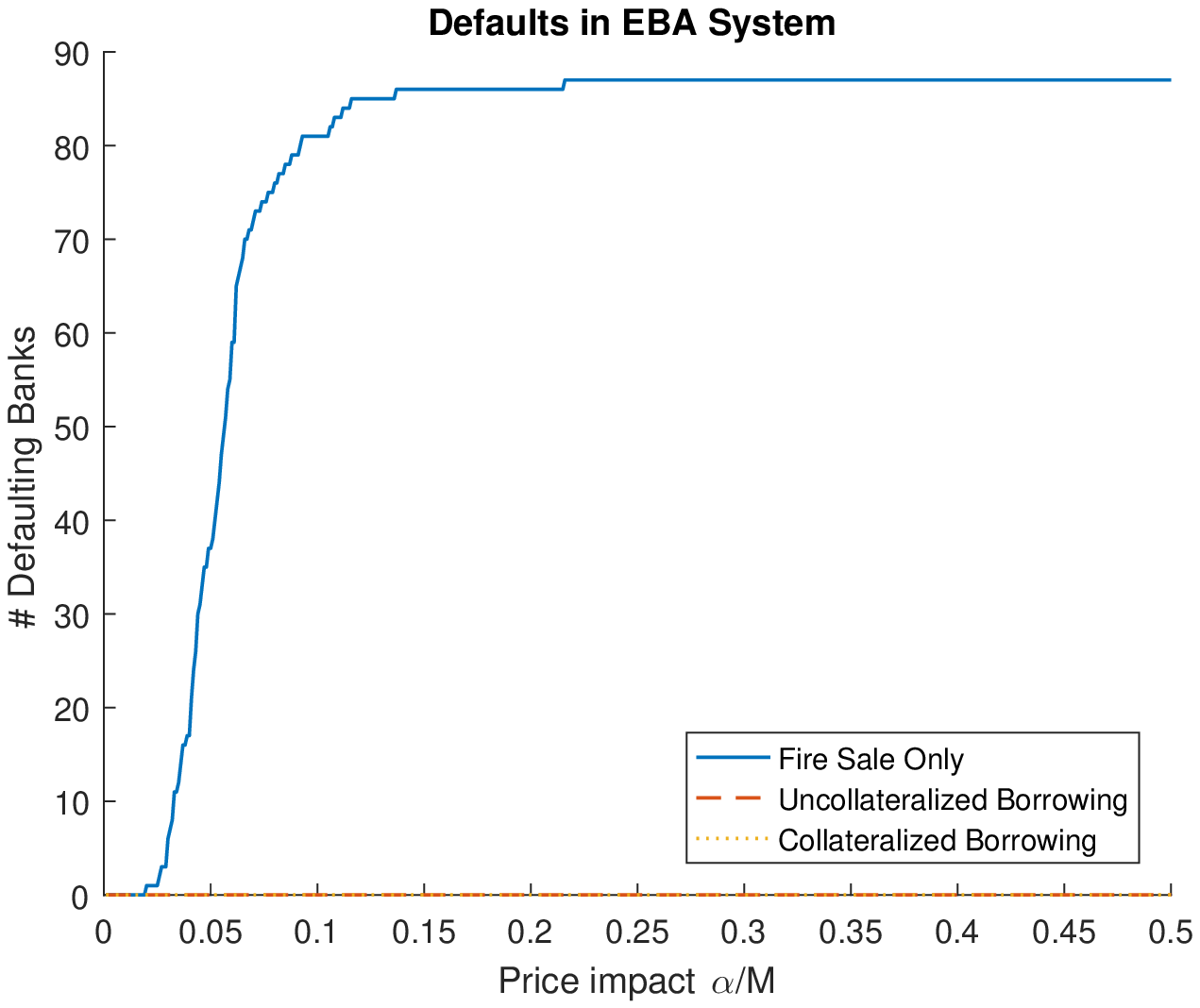}
\end{subfigure}
\caption{Left: Graph of equilibrium price (left axes) and losses (right axes) as a function of price impact. Right: Number of defaults as a function of price impact
}
\label{fig:graphs}
\end{figure}

\appendix
\section{Proof of Theorem~\ref{thm:exist}}\label{app1}
\begin{proof}[Proof of Theorem~\ref{thm:exist}]
Clearly, for $i\not\in C_3$, $s^{**}_{i}=0.$ For $i\in C_3$, two distinct scenarios will be considered:
The first scenario we consider is when the bank does not borrow and only liquidates. In this case 
$h_i = s f(s_{-i} + s).$  
The solution $s_i^{L}(s_{-i})$, if it exists, is unique and satisfies
\begin{align}
\label{eq:optim-sol}
s_i^{L}(s_{-i}) f(s_{-i} + s_i^{L}(s_{-i})) = h_i. 
\end{align}
If no solution to \eqref{eq:optim-sol} exists then $s_i^{L}(s_{-i}) = +\infty$.
Under the condition that the bank holds enough assets to possibly cover their entire shortfall, the existence and the uniqueness $s_i^{L}(s_{-i})$ follows from Assumption \ref{ass:idf}.

The second scenario considered here is when the bank does a mix of borrowing and liquidations.  From Assumption \ref{ass:idf}, this value can be found by equating the derivative of \eqref{optim-tmp} to zero and solving for $s_i^0(s_{-i})$ which satisfies
\begin{align}
1 - (1+r_i)(f(s_{-i} + s_i^0(s_{-i})) + s_i^0 f'(s_{-i} + s_i^0(s_{-i}))) = 0,
\label{eq:s-0}
\end{align}
where $s_i^0(s_{-i}) =+\infty$ if no such solution exists. Note that \textit{a priori} this might not be a solution to the optimization problem \eqref{optim-tmp}, as it additionally needs to be compared with $s^L_i(s_{-i})$ from \eqref{eq:optim-sol} as well as the bounds $0$ and $a_i$.
\begin{itemize}
\item If $f(s_{-i}) \geq (1 + r_i)^{-1}$ then the solution \eqref{eq:s-0}, if it exists, is unique since, by Assumption \ref{ass:idf}, $\operatorname{card}\{s \ge0 \; | \; f(s_{-i} + s) + s f'(s_{-i} + s) = (1 + r_i)^{-1}\} \le1$ in this case. Moreover, the solution to the original optimization problem \eqref{optim-tmp} is given by 
$s_i^*(s_{-i}) = \min(s_i^L(s_{-i}) , s_i^0(s_{-i}), a_i)$.
If $s_i^0(s_{-i}) \in [0,s_i^L(s_{-i})\wedge a_i]$ then it is optimal and the bank should liquidate $s_i^0(s_{-i})$ shares of the illiquid asset and borrow the remainder of the liquid shortfall.  If $s_i^0(s_{-i}) > s_i^L(s_{-i})\wedge a_i$ then it follows that 
\begin{align}
f(s_{-i} + s) + s f'(s_{-i} + s) > (1 + r_i)^{-1}~\mbox{ for every } s \in [0,s_i^L(s_{-i}) \wedge a_i].
\label{eq:inequl-s}
\end{align}
 Thus if $ s_i^L(s_{-i}) \le a_i$, the bank only needs to liquidate $s_i^L(s_{-i})$ number of illiquid assets, and no borrowing is needed. Whereas if $s_i^L(s_{-i}) > a_i$ the bank will liquidate all of its illiquid asset holdings $a_i$ and borrow the rest. This is the optimal behavior by \eqref{eq:inequl-s}.
From a financial perspective, it is optimal to liquidate as little as possible, so it is sufficient to liquidate $s_i^L(s_{-i})\wedge a_i$ if $s_i^0(s_{-i}) > s_i^L(s_{-i})\wedge a_i.$
%
%
%
\item If $f(s_{-i}) < (1 + r_i)^{-1}$ then $s_i^*(s_{-i}) = 0$, i.e. the optimal solution $s_i^{*} $  of \eqref{optim-tmp} is pure borrowing, and liquidating nothing.  This easily follows as for $s \in [0,a_i]$ the objective value is nondecreasing in $s$:
$$1 - (1 + r_i)(f(s + s_{-i}) + s f'(s + s_{-i})) > 1 - (1+r_i)(1+r_i)^{-1} = 0.$$
\end{itemize}
To summarize, each bank $i$ will choose to liquidate $s_i^*(s_{-i})$ shares of the illiquid asset provided that in aggregate all other firms are liquidating $s_{-i}$, where $s_i^*(s_{-i})$ is given by:
\begin{align}
s_i^*(s_{-i}) = \begin{cases} \min(s_i^L(s_{-i}) , s_i^0(s_{-i}), a_i) & \text{if } f(s_{-i}) \geq (1 + r_i)^{-1} \text{ and } i \in C_3\\ 0 &\text{otherwise}
\end{cases}.
\label{eq:s*}
\end{align}
In fact, we can see that $s_i^*: [0,M-a_i] \to [0,a_i]$ is continuous due to continuity of each of its components. Indeed, $s_i^*$ is continuous as a function of $s_{-i}$ in the (possibly empty) regions where  $f(s_{-i}) < (1 + r_i)^{-1}$ and $f(s_{-i}) > (1 + r_i)^{-1}$. It also can be seen that $s_i^*(s_{-i})\searrow0$ as $f(s_{-i}) \searrow (1 + r_i)^{-1}$, which establishes the continuity at the point where  $ f(s_{-i}) = (1 + r_i)^{-1}.$ 
Therefore, by Brouwer's fixed point theorem there exists an equilibrium liquidation strategy $s^{**} = \(s_1^*(s_{-1}^{**}) , ... , s_n^*(s_{-n}^{**})\)^\T \in \prod_{i = 1}^n [0,a_i]$.
\end{proof}

\section{Proof of Theorem~\ref{thm:unique}}\label{app2}
Before we can prove this theorem we require the following auxiliary lemma.
\begin{lemma}\label{lemma:diag-convex}
The function $H(s; \rho) = \sum_{i=1}^n H_i(s;\rho_i) ,~\rho \in\R_{+}^n$, is diagonally strictly convex where $H_i(s;\rho_i)=\rho_i \(s _i \(1-f\(\sum_{j=1}^n s_j \)\) + r_i \(h_i - s _i f\(\sum_{j=1}^n s_j \)\)\)$.
\end{lemma}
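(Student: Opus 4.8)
The plan is to verify the sufficient condition of Rosen \cite{rosen65} for diagonal strict convexity: writing the pseudo-gradient $g(s;\rho)$ with components $g_i(s;\rho)=\partial_{s_i}H_i(s;\rho_i)$ and letting $G(s;\rho)=[\partial_{s_j}g_i]_{i,j=1,\dots,n}$ be its Jacobian, it suffices to show that the symmetric matrix $G(s;\rho)+G(s;\rho)^\T$ is positive definite for every $s$ in the (convex) strategy domain. Diagonal strict convexity is invoked downstream only through Rosen's uniqueness theorem, and since the firms' constraint sets are orthogonal it is enough to exhibit a single positive weighting; the estimate is cleanest for the normalization $\rho_i=(1+r_i)^{-1}$, which I adopt.

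First I would compute the pseudo-gradient. Writing $\sigma:=\sum_{j=1}^n s_j$, differentiation of $H_i$ in $s_i$ gives $g_i(s;\rho)=\rho_i\big(1-(1+r_i)(f(\sigma)+s_i f'(\sigma))\big)$, consistent with \eqref{eq:s-0}. Differentiating once more yields $G(s;\rho)=-\diag(\rho_i(1+r_i))\big((I+{\bf 1}_{n\times n})f'(\sigma)+\diag(s){\bf 1}_{n\times n}f''(\sigma)\big)$, which is exactly the matrix appearing in Algorithm~\ref{alg:unique} for $\rho_i=(1+r_i)^{-1}$. With that choice the diagonal prefactor is the identity and, writing $s$ also for the column vector $(s_1,\dots,s_n)^\T$,
\[
G(s)+G(s)^\T = -2f'(\sigma)\big(I+{\bf 1}_{n\times n}\big) - f''(\sigma)\big(s\,{\bf 1}_n^\T+{\bf 1}_n\, s^\T\big).
\]

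The heart of the argument is to show this matrix is positive definite for all admissible $s$ (i.e.\ $\sigma\le M$). For $x\in\R^n$, writing $S:={\bf 1}_n^\T x=\sum_i x_i$, a direct expansion gives
\[
x^\T\big(G(s)+G(s)^\T\big)x = 2|f'(\sigma)|\big(\|x\|^2+S^2\big) - 2f''(\sigma)\,S\,(s^\T x),
\]
with $\|\cdot\|$ the Euclidean norm. Here I would use Assumption~\ref{ass:idf}: $f'(\sigma)<0$, $f''(\sigma)\ge0$, and $s_i\ge0$, so that $\|s\|\le\sum_i s_i=\sigma$ and hence $|s^\T x|\le\sigma\|x\|$. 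Bounding the (sign-indefinite) cross term by this estimate produces the lower bound $2|f'(\sigma)|\,\|x\|^2-2f''(\sigma)\sigma\,|S|\,\|x\|+2|f'(\sigma)|\,S^2$, a quadratic form in the two nonnegative quantities $\|x\|$ and $|S|$. This form is positive definite precisely when its discriminant is negative, i.e.\ $\big(f''(\sigma)\sigma\big)^2<\big(2|f'(\sigma)|\big)^2$, equivalently $2f'(\sigma)+\sigma f''(\sigma)<0$ — which is exactly the strict concavity of $s\mapsto sf(s)$ postulated in Assumption~\ref{ass:idf}. Thus for $x\ne0$ (so $\|x\|>0$) the form is strictly positive, giving $G(s)+G(s)^\T\succ0$ and the asserted diagonal strict convexity.

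The main obstacle is the cross term $-2f''(\sigma)S(s^\T x)$: it is the only term that can be negative, and controlling it is where the full strength of Assumption~\ref{ass:idf} is used. The estimate is sharp — the worst case $|s^\T x|=\sigma\|x\|$ combined with the boundary $2f'(\sigma)+\sigma f''(\sigma)=0$ makes the form degenerate — so the strict inequality in the assumption is exactly what is needed, and no additional hypothesis from \eqref{eq:add-conditn} is required for this lemma. A secondary point to check is that the bound is uniform in $s$ over the admissible region; this holds because Assumption~\ref{ass:idf} is posed on all of $[0,M]$ and the estimate used nothing about $s$ beyond $s\ge0$ and $\sigma\le M$.
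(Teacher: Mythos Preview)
Your argument is correct and reaches the same conclusion as the paper, but via a genuinely different route. The paper works at the matrix level: after specializing to $\rho_i=(1+r_i)^{-1}$ and computing the same symmetrized Jacobian, it decomposes
\[
G(s)+G(s)^\T = G_1(s)+G_2(s)+f''(\sigma)\,G_3(s),
\]
where $G_1(s)=-(2f'(\sigma)+\sigma f''(\sigma)){\bf 1}_{n\times n}$ is positive semidefinite by the strict concavity of $s\mapsto sf(s)$, $G_2(s)=-\diag\big(2f'(\sigma){\bf 1}_n+s f''(\sigma)\big)$ is positive definite (each diagonal entry dominates $-(2f'(\sigma)+\sigma f''(\sigma))>0$), and $G_3(s)$, with entries $\sum_{k\ne i,j}s_k$, is shown to be positive semidefinite by writing it as $\sum_i s_i\,v_iv_i^\T$ for $v_i={\bf 1}_n-e_i$. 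Your proof instead evaluates $x^\T(G+G^\T)x$ directly, bounds the sign-indefinite cross term via $|s^\T x|\le\sigma\|x\|$, and reduces positivity to a $2\times2$ discriminant condition that again collapses to $2f'(\sigma)+\sigma f''(\sigma)<0$. The paper's decomposition makes the role of each structural piece (the all-ones block, the diagonal, the residual) visible and yields the $G_3$ identity that may be of independent interest; your scalar approach is shorter, avoids the auxiliary matrix algebra, and makes transparent that nothing beyond Assumption~\ref{ass:idf} is needed here.
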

\begin{proof}
Recall from \cite{rosen65} that for $s\in\R^n$, the function $s\mapsto H(s;\rho)$ is diagonally strictly convex, if for some (fixed) $\rho\in\R_{+}^n$ and for every $s^0, s^1\in \R^n,~s^0 \neq s^1$, we have $(s^1 - s^0)^\T g(s^0; \rho) - (s^1 - s^0)^\T g(s^1; \rho) <0$, where 
\begin{align}g(s;\rho)   =  \(\begin{array}{c}  \d_{s_1} H_1(s;\rho_1)  \\  \vdots \\  \d_{s_n} H_n(s;\rho_n)\end{array}\) = \(\begin{array}{c} \rho_1 \d_{s_1} \(s _1 \(1-f\(\sum_{j=1}^n s_j \)\) + r_1 \(h_1 - s _1 f\(\sum_{j=1}^n s_j \)\)\) \\  \vdots \\ \rho_n \d_{s_n} \(s _n \(1-f\(\sum_{j=1}^n s_j \)\) + r_n \(h_n - s _n f\(\sum_{j=1}^n s_j \)\)\) \end{array}\).
\end{align}
Additionally, \cite[Theorem 6]{rosen65} shows that a sufficient condition for $H$ to be strictly convex is if $G(s;\rho) + G(s;\rho)^\T$ is a symmetric positive definite matrix for every $s \in \R^n$ and some $\rho \in \R_{+}^n$, where $G$ is the Jacobian matrix of $g$ with respect to $s$. 

Set $\rho_i = \frac1{1+r_i}$ then
\begin{align}
\[G(s;\rho) + (G(s;\rho))^\T\]_{ij} & = -\rho_i (1+r_i) \d_{s_is_j}^2\(s_i  f\(\sum_{k=1}^n s_k \)\) -\rho_j (1+r_j) \d_{s_js_i}^2\(s_j  f\(\sum_{k=1}^n s_k \)\) \\
&=- (2 + 2 \ind_{\{i=j\}}) f'\(\sum_{k=1}^n s_k \) - (s_i+s_j) f''\(\sum_{k=1}^n s_k \).
\end{align}
Thus, write $G(s;\rho) + G(s;\rho)^\T = G_1(s) + G_2(s) +f''\(\sum_{k=1}^n s_k \)  G_3(s),$
where 
\begin{align}
G_1(s) &= -\(2  f'\(\sum_{k=1}^n s_k \) +\(\sum_{k=1}^n s_k\) f''\(\sum_{k=1}^n s_k \) \){\bf{1}}_{n\times n},\\
G_2(s) & = -\diag\(2  f'\(\sum_{k=1}^n s_k \){\bf{1}}_{n} +s f''\(\sum_{k=1}^n s_k \) \),\\
[G_3(s)]_{ij} &= \sum_{k\ne i,j} s_k .
\end{align}

For any liquidations $s$, by construction, the matrix $G_1(s)$ is positive semidefinite and $G_2(s)$ is positive definite. 
Next, we show that $G_3(s)$ is positive semidefinite for any set of liquidations $s$, which would then give the desired result, as $f''\ge0$ by Assumption \ref{ass:idf}. In order to do that, note that $G_3(s) = \sum_{i=1}^n s_i {\bf{1}}_{n\times n}^{(i)},$ where the matrices ${\bf{1}}_{n\times n}^{(i)},~i=1, ..., n$ are given by $\[{\bf{1}}_{n\times n}^{(i)}\]_{jk} = \ind_{\{j\ne i, k\ne i\}}.$ It is readily seen that ${\bf{1}}_{n\times n}^{(i)}$ are positive semidefinite. It then follows that so is $G_3(s)$. 
\end{proof}

\begin{proof}[Proof of Theorem~\ref{thm:unique}]
We first fix $q \in [f(M),1],$ and look for an equilibrium $\bar s_i(q) = s_i^{\dagger}(\sum_{j\ne i}\bar s_j(q),q)$. That is, we look for the modified Nash equilibrium given by \eqref{eq:Nash3}.
For a fixed $q$, the existence of such an equilibrium follows from the logic of Theorem~\ref{thm:exist} and uniqueness of $\bar s(q)$ is a result of Lemma~\ref{lemma:diag-convex} as shown in \cite[Theorem 2]{rosen65}.

The next goal is to show $q \mapsto \Phi(q) =f(\sum_{j=1}^n \bar s_j(q))$ is a contraction mapping. 
Indeed, for $q_1 \neq q_2$:
\begin{align}
\frac{\abs{\Phi(q_1) -\Phi(q_2)}}{\abs{q_1 - q_2}} &= \frac{1}{\abs{q_1 - q_2}}\abs{f(\sum_{j=1}^n \bar s_j(q_1)) - f(\sum_{j=1}^n \bar s_j(q_2))} 
\le  -f'(0) \max_{q\in [f(M),1]} \abs{ \sum_{j=1}^n\bar s_j'(q)}.
\label{eq:contract1}
\end{align}
Thus to be a contraction mapping, it is sufficient to show that 
\begin{align}
-f'(0) \max_{q\in [f(M),1]} \abs{ \sum_{j=1}^n\bar s_j'(q)} <1.
\label{eq:contract2}
\end{align}

In order to show this, consider the sensitivity of $\bar s(q)$ with respect to $q$. Recall the construction of $s^{*}$ given by \eqref{eq:s*}; here we will replace $s_i^L(s_{-i})$ with $h_i/q$. Assume that $a_i, \frac{h_i}{q} , s_i^0(\sum_{j \neq i} \bar s_j(q) )$ are all different for all $i=1,..., n$, so that together with the continuity of $s^0$ it follows that $\bar s$ is differentiable with respect to $q$ and its derivative for a given bank $i$ is given by 
\begin{align}
\bar s_i'(q) = \ind_{\{i \in C_3\}}\(-\ind_{\{\frac{h_i}{q} < a_i \wedge s_i^0(\sum_{j \neq i} \bar s_j(q)) \}} \frac{h_i}{q^2} + (s_i^0)'(\sum_{j\ne i} \bar s_j(q)) (\sum_{j\ne i} \bar s_j'(q))  \ind_{\{0 \leq s_i^0 (\sum_{j \neq i} \bar s_j(q)) < \frac{h_i}{q} \wedge a_i  \} }    \).~~~~~~~~~
\label{eq:s-bar}
\end{align}
Here, 
the derivative of the optimal liquidation $s_i^0(s_{-i}), i=1, ...,n$ can be found via implicit differentiation by solving:
\begin{align}
&0 = (1 + (s_i^0)'(s_{-i})) f'(s_{-i} + s_i^0(s_{-i})) + (s_i^0)'(s_{-i}) f'(s_{-i} + s_i^0(s_{-i})) + s_i^0(s_{-i}) (1 + (s_i^0)'(s_{-i})) f''(s_{-i} + s_i^0(s_{-i}))\\
&= (s_i^0)'(s_{-i}) \(2 f'(s_{-i} + s_i^0(s_{-i})) + s_i^0(s_{-i}) f''(s_{-i} + s_i^0(s_{-i}))\) + \(f'(s_{-i} + s_i^0(s_{-i})) + s_i^0(s_{-i}) f''(s_{-i} + s_i^0(s_{-i}))\).
\end{align}
Thus by rearranging terms,
\begin{align}
(s_i^0)'(s_{-i}) &= -\frac{f'(s_{-i} + s_i^0(s_{-i})) + s_i^0(s_{-i}) f''(s_{-i} + s_i^0(s_{-i}))}{2f'(s_{-i} + s_i^0(s_{-i})) + s_i^0(s_{-i}) f''(s_{-i} + s_i^0(s_{-i}))}\label{eq:s0-prime}\\
&= \frac{f'(s_{-i} + s_i^0(s_{-i}))}{2f'(s_{-i} + s_i^0(s_{-i})) + s_i^0(s_{-i}) f''(s_{-i} + s_i^0(s_{-i}))} - 1.
\end{align}
Therefore  $(s_i^0)'(s_{-i})\in (-1,0]$ for all $i\in C_3,$ if $f'(x + s) \ge (2 f'(x + s) + s f''(x + s))$ for every $s \in [0,M-x]$ and any $x \in [0,M]$. It is sufficient to consider $x=0$, in other words, it is sufficient that the condition $f'(s) + s f''(s) <0$ holds.

Noting that $\bar s$ is zero for Case I and II institutions, solving the system \eqref{eq:s-bar}, it follows that 
\begin{align}
\bar s'(q) &= -\( I - \diag\( \[(s_i^0)'(\sum_{j\ne i} \bar s_j(q))  \ind_{\{0 \leq    s_i^0(\sum_{j \neq i} \bar s_j(q)) < \frac{h_i}{q} \wedge a_i  \} }\ind_{\{i \in C_3\}}   \]_{i = 1,...,n}\)\({\bf{1}}_{n\times n} - I\)\)^{-1} \\
&\qquad\qquad\times\diag\(\[\ind_{\{\frac{h_i}{q} < a_i \wedge s_i^0  (\sum_{j \neq i} \bar s_j(q)) \} }  \ind_{\{i \in  C_3 \}} \]_{i=1,...,n} \)\frac{h}{q^2}.~~~~~~~
\label{eq:bar-s'}
\end{align}
Using the fact that $(s_i^0)'(s_{-i})\in (-1,0]$ for $i=1,..., n$, it thus follows that 
\begin{align}
\left|{\bf{1}}_n^\T \bar s'(q)\right| \le \max_{d\in [0,1)^n} \left|{\bf{1}}_n^\T (I + \diag(d)({\bf{1}}_{n\times n} - I))^{-1} \diag\(\[\ind_{\{d_i=0,~\frac{h_i}{q} < a_i \}}\]_{i=1,...,n} \)\frac{h}{q^2}\right|.
\label{eq:bar-s'-bnd}
\end{align}
To compute this maximum, let $B(d) := I + \diag(d)({\bf{1}}_{n\times n} - I) = \diag\({\bf{1}}_{n} - d\) + d {\bf{1}}_{n}^\T$. 
By the Sherman-Morrison formula 
\begin{align}
&B(d)^{-1} = \diag\({\bf{1}}_{n}-d\)^{-1} -\frac1{1+{\bf{1}}_{n}^\T \diag\({\bf{1}}_{n}-d\)^{-1} d} \diag\({\bf{1}}_{n}-d\)^{-1}d {\bf{1}}_{n}^\T \diag\({\bf{1}}_{n}-d\)^{-1}\label{eq:B-invserse}\\
&= \(\begin{array}{llll}
\frac{1}{1-d_1} & 0 & \cdots & 0\\
0 & \frac{1}{1-d_2} & \cdots & 0 \\ 
\vdots & \vdots & \vdots & \vdots\\ 
0 & 0& \cdots & \frac{1}{1-d_n}
\end{array}\) -\frac1{1+ \sum_{k=1}^n \frac{d_k}{1-d_k}} 
\(\begin{array}{llll}
\frac{d_1}{(1-d_1)^2} & \frac{d_1}{(1-d_1)(1-d_2)} & \cdots & \frac{d_1}{(1-d_1)(1-d_n)}\\
\frac{d_2}{(1-d_2)(1-d_1)} & \frac{d_2}{(1-d_2)^2} & \cdots & \frac{d_2}{(1-d_2)(1-d_n)}\\
\vdots & \vdots & \vdots & \vdots\\ 
\frac{d_n}{(1-d_n)(1-d_1)} & \frac{d_n}{(1-d_n)(1-d_2)} & \cdots & \frac{d_n}{(1-d_n)^2}
\end{array}\).
\end{align}
It now follows that
\begin{align}
\label{eq:B-bnd}
&B(d)^{-1} \diag\(\[\ind_{\{d=0\}}\]_{i=1, .., n}\)\\
&= \frac1{1+ \sum_{k=1}^n \frac{d_k}{1-d_k}} \(\begin{array}{llll}
\ind_{\{d_1=0\}}(1+ \sum_{k=1}^n \frac{d_k}{1-d_k}) & -\frac{d_1\ind_{\{d_2=0\}}}{1-d_1} & \cdots & -\frac{d_1\ind_{\{d_n=0\}}}{1-d_1}\\
-\frac{d_2\ind_{\{d_1=0\}}}{1-d_2} & \ind_{\{d_2=0\}}(1+ \sum_{k=1}^n \frac{d_k}{1-d_k}) & \cdots & -\frac{d_2\ind_{\{d_n=0\}}}{1-d_2}\\
\vdots & \vdots & \vdots & \vdots\\ 
-\frac{d_n\ind_{\{d_1=0\}}}{1-d_n} &- \frac{d_n\ind_{\{d_2=0\}}}{1-d_n} & \cdots & \ind_{\{d_n=0\}}(1+ \sum_{k=1}^n \frac{d_k}{1-d_k})
\end{array}\)   .
\end{align}
And thus for any $j = 1,...,n$
\begin{align*}
\sum_{i = 1}^n \[B(d)^{-1}\]_{ij}\ind_{\{d_j = 0\}} = \frac{1}{1 + \sum_{k = 1}^n \frac{d_k}{1-d_k}}\(1 + \sum_{k = 1}^n \frac{d_k}{1-d_k} - \sum_{k \neq j} \frac{d_k}{1-d_k}\)\ind_{\{d_j = 0\}} = \frac{\ind_{\{d_j = 0\}}}{1 + \sum_{k = 1}^n \frac{d_k}{1-d_k}}.
\end{align*}
We conclude that 
\begin{align}
\max_{q\in [f(M),1]}\abs{{\bf{1}}_{n}^\T \bar s'(q)} & \le\max_{q\in [f(M),1],d\in [0,1)^n}\left\vert {\bf{1}}^\T_{n} B(d)^{-1} \diag\(\[\ind_{\{d_i=0,~\frac{h_i}{q} < a_i\} }\]_{i=1, ..., n} \)\frac{h}{q^2} \right\vert \label{eq:deriv-bnd}\\
&\le \max_{q\in [f(M),1]}\abs{{\bf{1}}_{n}^\T\frac{\frac{h}{q}\wedge a} {q}} \le \max_{q\in [f(M),1]} \frac{\sum_{i=1}^n a_i }q \le \frac{M}{f(M)}.
\end{align}
Recalling \eqref{eq:contract1}, we conclude that $\Phi$ is a contraction mapping if $-M f'(0) < f(M).$

Recall that it was assumed that $a_i, \frac{h_i}{q} , s_i^0(\sum_{j \neq i} \bar s_j(q) )$ are all different. If this assumption is violated, say $  s_i^0(\sum_{j \neq i} \bar s_j(q) )> a_i = \frac{h_i}{q}$, then we need to consider one-sided derivatives. In that case, the derivative from the left $\d_{-} \bar s_i(q) =0$, while the derivative from the right $\d_{+}\bar s_i(q) = -\frac{h}{q^2}.$ In this case, both one-sided derivatives would satisfy \eqref{eq:deriv-bnd}. The other cases can be treated similarly. 

\end{proof}

\section{Proof of Algorithm~\ref{alg:unique}}\label{app3}
\begin{proof}[Proof of Algorithm~\ref{alg:unique}]
The convergence of~\eqref{alg:convergence} follows from an outer convergence for the equilibrium price $q = f(\sum_{i = 1}^n \bar s_i(q))$ and an inner convergence to compute $s^k = \bar s(q^{k-1})$ for all iterations $k \geq 1$.  The outer convergence follows directly from the proof of Theorem~\ref{thm:unique} which shows that $f(\sum_{i = 1}^n \bar s_i(q))$ is a contraction mapping.  Theorem~10 of \cite{rosen65} provides the convergence of the inner loop to find $s^k$ as the equilibrium $\bar s(q^{k-1})$ for the fixed price level $q^{k-1}$.  In particular, this inner loop is a gradient descent algorithm for the coupled optimization problems~\eqref{eq:Nash3} through consideration of the KKT conditions.  The gradient and step-size, as constructed for Theorem~10 of \cite{rosen65}, are provided by the mapping $g$ and the value $t$ respectively.
\end{proof}

\section{Proof of Theorem~\ref{thm:unique-repo}}\label{app4}
\begin{proof}[Proof of Theorem~\ref{thm:unique-repo}]
This proof remains largely unchanged from those of Theorems~\ref{thm:exist} and \ref{thm:unique}. The mapping $q \mapsto \Phi(q) =f(\sum_{j=1}^n \bar s_j(q))$ is a contraction mapping if \eqref{eq:contract1} is dominated by $1$. We need to change $\bar s'(q)$ in \eqref{eq:s-bar} in the following manner. Similar to the original proof of Theorem \ref{thm:unique} we assume that  the quantities $a_i ,\frac{a_i-h_i}{1-q},  \frac{h_i}{q}, s_i^0(s_{-i})~i=1, ...,n $ are all different. The modification to the proof if that is not the case, is the same as in the original proof. Under this assumption 
\begin{align}
\bar s_i'(q) &= \ind_{\{i \in C_4\}} \Big(-\ind_{\{\frac{h_i}{q}\wedge\frac{a_i-h_i}{1-q} < a_i \wedge s_i^0(\sum_{j \neq i} \bar s_j(q))\}} \( \frac{h_i}{q^2}\ind_{\{\frac{h_i}{q} <\frac{a_i-h_i}{1-q}\}} -  \frac{a_i-h_i}{(1-q)^2}\ind_{\{\frac{h_i}{q} \ge\frac{a_i-h_i}{1-q}\}} \)\\
&+ (s_i^0)'(\sum_{j\ne i} \bar s_j(q)) (\sum_{j\ne i} \bar s_j'(q))  \ind_{\{0 \leq s_i^0(\sum_{j \neq i} \bar s_j(q)) < \frac{h_i\wedge(a_i-h_i)}{q} \wedge a_i\}}\Big), \mbox{ for }i=1,..., n. 
\end{align}
Then, similar to \eqref{eq:bar-s'}
\begin{align}
&\bar s'(q) \\
&= -\( I - \diag\( \[(s_i^0)'(\sum_{j\ne i} \bar s_j(q))  \ind_{\{0 \leq s_i^0(\sum_{j \neq i} \bar s_j(q)) < \frac{h_i}{q} \wedge \frac{a_i-h_i}{1-q}\wedge a_i\}}\ind_{\{i \in C_4\}}\]_{i=1, ..., n}\)\({\bf{1}}_{n\times n} - I\)\)^{-1} \\
&\times\diag\(\[\ind_{\{\frac{h_i}{q}\wedge\frac{a_i-h_i}{1-q} < a_i \wedge s_i^0(\sum_{j \neq i} \bar s_j(q))\}}\ind_{\{i \in C_4\}}\]_{i=1, ..., n} \) \[ \frac{h_i}{q^2}\ind_{\{\frac{h_i}{q} <\frac{a_i-h_i}{1-q}\}} -  \frac{a_i-h_i}{(1-q)^2}\ind_{\{\frac{h_i}{q} \ge\frac{a_i-h_i}{1-q}\}} \]_{i=1, ..., n}.
\label{eq:bar-s'-repo}
\end{align}
And similar to \eqref{eq:bar-s'-bnd} we recover 
\begin{align}
&\left|{\bf{1}}_n^\T \bar s'(q)\right| \\
&\le \max_{d\in [0,1)^n} \left|{\bf{1}}_n^\T B(d)^{-1} \diag\(\[\ind_{\{d_i = 0,~\frac{h_i}{q}\wedge\frac{a_i-h_i}{1-q} < a_i\}}\]_{i=1, ..., n} \) \[ \frac{h_i}{q^2}\ind_{\{\frac{h_i}{q} <\frac{a_i-h_i}{1-q}\}} -  \frac{a_i-h_i}{(1-q)^2}\ind_{\{\frac{h_i}{q} \ge\frac{a_i-h_i}{1-q}\}} \]_{i=1, ..., n}\right|
\label{eq:bar-s'-bnd-repo}
\end{align}
again letting $B(d) = I + \diag(d)({\bf{1}}_{n\times n} - I) $. We obtain the same bound \eqref{eq:B-bnd} on $B(d)^{-1} \diag\(\ind_{\{d=0\}}\)$.
We finally conclude 
\begin{align}
\max_{q\in [f(M),1]}&\abs{{\bf{1}}_{n}^\T \bar s'(q)}  \\
&\le\max_{\substack{q\in [f(M),1]\\d\in [0,1)^n}}\left\vert {\bf{1}}_{n}^\T B(d)^{-1} \diag\(\ind_{\{d = 0,~\frac{h}{q}\wedge\frac{a-h}{1-q} < a\}}\) \[ \frac{h_i}{q^2}\ind_{\{\frac{h_i}{q} <\frac{a_i-h_i}{1-q}\}} -  \frac{a_i-h_i}{(1-q)^2}\ind_{\{\frac{h_i}{q} \ge\frac{a_i-h_i}{1-q}\}} \]_{i=1, ..., n}\right\vert\\
&\le \max_{q\in [f(M),\max\limits_{i = 1,...,n}\frac{h_i}{a_i}]}\abs{{\bf{1}}_{n}^\T\frac{\frac{a-h}{1-q}\wedge a}{1-q} } \vee 
\max_{q\in [f(M),1]}\abs{{\bf{1}}_{n}^\T\frac{\frac{h}{q}\wedge a}{q}  } \\
&\le \frac{\sum_{i = 1}^n a_i}{\nu} \vee \frac{\sum_{i = 1}^n a_i}{f(M)}
\le \frac{M}{\nu\wedge f(M)}.
\label{eq:deriv-bnd-repo}
\end{align}

\end{proof}

\bibliographystyle{myplain}
\small{\bibliography{bibtex2}}

\end{document}